\newtheorem{theorem}{{\bf Theorem}}
\newtheorem{corollary}{{\bf Corollary}}
\newtheorem{cond}{Condition}
\newcommand{\comment}[1]{}
\newcommand{\IGNORE}[1]{}
\title{Free Energy Approximations for CSMA networks}
\author{B. Van Houdt\\
Dept. Mathematics and Computer Science\\
University of Antwerp, Belgium}
\date{}
\begin{document}

\maketitle
\begin{abstract}

In this paper we study how to estimate the back-off rates 
in an idealized CSMA network consisting of $n$ links
to achieve a given throughput vector using free energy
approximations. 
More specifically, we introduce the class of
region-based free energy approximations with clique belief and present a closed form
expression for the back-off rates based on the zero gradient points of the free energy
approximation (in terms of the conflict graph, target throughput vector and counting numbers).

Next we introduce the size $k_{max}$ clique free energy approximation as a special case and
derive an explicit expression for the counting numbers, as well as a recursion to
compute the back-off rates. We subsequently show
that the size $k_{max}$ clique approximation coincides with a Kikuchi free energy approximation 
and prove that it is exact on chordal conflict graphs when $k_{max} = n$. As a by-product these results provide us with an
explicit expression of a fixed point of the inverse generalized belief 
propagation algorithm for CSMA networks.

Using numerical experiments we compare the accuracy of the novel approximation method with existing methods.
\end{abstract}

\section{Introduction}
Carrier sense multiple access (CSMA) networks form an attractive random access solution for wireless networks
due to their fully distributed nature and low complexity. In order to guarantee a certain set of
feasible throughputs  for the links part of a CSMA network (defined as the fraction of the time that a
link is active), the {\it back-off rate} of each link has to be set
in the appropriate manner which depends on the
network topology, i.e., how the different links in the network interfere with each other. 
In order to obtain a better understanding of how these back-off rates affect the throughputs,
the ideal CSMA network model was introduced (see \cite{boorstyn1,durvy2,jiang2,jiang1,vandeven5,vandeven1,yun1})
and this model was shown to provide good estimates for the throughput achieved 
in real CSMA like networks \cite{wang5}.

The product form solution of the ideal CSMA model was established long ago \cite{boorstyn1} 
(for exponential back-off durations) and the set $\Gamma$ of achievable throughput vectors $\vec \phi = (\phi_1,\ldots,\phi_n)$,
where $\phi_i$ is the throughput of link $i$,  was characterized in \cite{jiang1}.
Further, for each vector $\vec \phi \in \Gamma$ the existence of a unique vector of back-off rates that achieves $\vec \phi$ was proven in \cite{vandeven5}.
None of the above results indicates how to set the back-off rates to achieve a given vector $\vec \phi \in \Gamma$
(except for very small networks).
For line networks with a fixed interference range this problem was solved in \cite{vandeven1} for any
$\vec \phi = (\alpha,\ldots,\alpha) \in \Gamma$,
while \cite{yun1} presented a closed form expression for the back-off rates to achieve any 
$\vec \phi \in \Gamma$ in case the conflict graph is a tree. This expression was obtained from the
zero gradient points of the {\it Bethe free energy} and can be used as an approximation in general conflict graphs,
termed the Bethe approximation. 
The explicit results for line and tree networks were generalized in \cite{vanhoudt_ton17}, where a simple formula for
the back-off rates was presented for any {\it chordal} conflict graph $G$. This formula was subsequently
used to develop the local chordal subgraph (LCS) approximation for general conflict graphs.  

Another method to approximate unique back-off rates given an achievable target throughput vector 
exists in using inverse (generalized) belief
propagation (I(G)BP) algorithms \cite{kai1,yedidia2}.  These algorithms are 
message passing algorithms that in general are not guaranteed to converge to a fixed point. 
In \cite{kai1}  the IBP algorithm for CSMA was argued to converge to the exact vector of back-off rates
when the conflict graph is a tree, but convergence of IGBP for loopy graphs to a (unique) fixed point was not established. 
%For loopy graphs, mean throughput errors below $10\%$ were observed for IBP,
%while the more complex IGBP algorithm further reduced these errors below $1\%$. 
%Convergence of IGBP for loopy graphs was not established in \cite{kai1} and the class of conflict graphs (apart from trees) for which it
%provides exact result was not identified. 
Belief propagation algorithms are intimately related to free energy approximations as their fixed points can be shown 
to correspond to the zero gradient points of an associated free energy approximation \cite{yedidia1}.

The main objective of this paper is to introduce more refined free energy approximations (compared to the Bethe approximation)
for the ideal CSMA model that yield closed form approximations for the back-off rates and to compare their accuracy with the Bethe and LCS approximation.
The contributions of the paper are as follows. 

First, we introduce a class of region-based free energy approximations with clique belief and 
a closed form expression for the CSMA back-off rates based on its zero gradient points (Section \ref{sec:cliquebelief}).

Second, we propose the size $k_{max}$ clique approximation as a special case and present a closed form
expression for its counting numbers, as well as a recursive algorithm to compute the back-off rates more efficiently 
(Section \ref{sec:kmax}). Setting $k_{max}=2$ reduces the size $k_{max}$ clique approximation to the Bethe approximation of \cite{yun1}.

Third, we prove that the size $k_{max}$ clique approximation coincides with a Kikuchi approximation
(Section \ref{sec:kikuchi}).
As the Kikuchi approximation used to devise the IGBP algorithm of \cite{kai1} corresponds to
setting $k_{max}=n$, the size $n$ clique approximation gives a closed form expression for a
fixed point of the IGBP algorithm.

Fourth, an exact free energy approximation for chordal conflict graphs is introduced and is proven to
coincide with the size $n$ clique approximation (Section \ref{sec:chordal}). This implies that a fixed point of the
IGBP algorithm gives exact results on chordal conflict graphs.

Finally, simulation results are presented that compare the accuracy of the size $k_{max}$ clique approximation
with the LCS algorithm presented in \cite{vanhoudt_ton17} (Section \ref{sec:eval}). 
The main observation is that the LCS approximation is less accurate and less robust for denser conflict
graphs compared to the size $k_{max}$ clique approximation.

Before presenting the above results (in Sections \ref{sec:cliquebelief} to \ref{sec:eval}), we start with a model description in Section \ref{sec:model} and 
a basic introduction on (region-based) free energy approximations in Section \ref{sec:free}.
Conclusions are drawn in Section \ref{sec:conc}.

We end this section by noting that more advanced free energy approximations have very recently
and independently been proposed by other researchers to approximation the back-off rates in CSMA networks. 
More specifically, in \cite{swamy_SPCOM} the authors proposed the Kikuchi approximation induced by all
the maximal cliques of the conflict graph. In Section \ref{sec:kikuchi} of this paper we prove that this approximation
coincides with the size $n$ clique approximation. Further, the authors also prove the exactness
of the maximal clique based Kikuchi approximation on chordal conflict graphs as is done in 
Section \ref{sec:chordal} in this paper. In \cite{swamy_arXiv2017} the authors generalized
their work using the region-based free energy framework of \cite{yedidia1} and also consider an approximation
that includes the $4$-cycles as regions. 

\section{System model}\label{sec:model}

The ideal CSMA model considers a fixed set of $n$ links where a link is said to be active if a packet is being 
transmitted on the link and inactive otherwise. Whether two links $i$ and $j$ can be active simultaneously is determined by the undirected
conflict graph $G=(V,E)$, where $V$ is the set of $n$ links. If $(i,j) \in E$ then link $i$ and $j$ cannot be active at the
same time. The time that a link remains active is represented by an independent and identically distributed random variable
with mean one, after which the link becomes inactive. When link $i$ becomes inactive it starts a back-off period, the length
of which is an independent and identically distributed random variable with mean $1/\nu_i$.  When the back-off period
of link $i$ ends and none of the neighbors of $i$ in $G$ are active, link $i$ becomes active; otherwise a new back-off period
starts. Note, in such a setting sensing is assumed to be instantaneous and therefore no collisions occur. Also, links
attempt to become active all the time, which corresponds to considering a saturated network\footnote{It is worth noting that a considerable body of work exists that considers unsaturated CSMA networks where each link maintains its own buffer to store packets that arrive according to a Poisson
process (e.g., \cite{shah1,ni1,bouman1,laufer1,cecchi1}).}.
Hence, the ideal CSMA model is fully characterized by the conflict graph $G=(V,E)$ and the vector of back-off rates
$(\nu_1,\ldots,\nu_n)$. 

While the set of links, interference graph and target throughputs are all fixed in the ideal CSMA network,
the results presented in this paper are still meaningful in a network that undergoes gradual changes as in such case
the proposed approximation for the back-off rates can be recomputed at regular times in a {\it fully distributed} manner.

Let $x_i = 1$ if node $i$ is active and $0$ otherwise, for $i=1,\ldots,n$.
Define $f_i(x_i) = \nu_i^{x_i}$ and $f_{(i,j)}(x_i,x_j) = 1-x_ix_j$, for $i,j \in \{1,\ldots,n\}$. 
It is well known \cite{boorstyn1,vandeven3} that  the probability that the network 
is in state $(x_1,\ldots,x_n)$ at time $t$ converges to
\begin{align}\label{eq:pf}
p(x_1&,\ldots,x_n) = \nonumber \\
& \frac{1}{Z} \left(\prod_{i=1}^n f_i(x_i) \right) \left(\prod_{(i,j)\in E} f_{(i,j)}(x_i,x_j)\right),
\end{align}
for $(x_1,\ldots,x_n) \in \{0,1\}^n$ as $t$ tends to infinity, where $Z$ is the normalizing constant. 
Note the factors $f_{(i,j)}(x_i,x_j)$ make sure that
the probability of being in state $(x_1,\ldots,x_n)$ is zero whenever two neighbors in $G$ would be active at the same time.

The throughput of link $i$ is simply given by the marginal probability
\[p_i(1) = \sum_{x \in \{0,1\}^n} p(x) 1_{\{x_i = 1\}}.\]
The focus in this paper is not on computing $p_i(1)$ given the back-off rates $(\nu_1,\ldots,\nu_n)$,
but on the inverse problem: how to set/estimate $(\nu_1,\ldots,\nu_n)$ such that a given target throughput vector
$(\phi_1,\ldots,\phi_n) \in \Gamma$ is achieved, that is, such that $p_i(1)=\phi_i$ for all $i$.
In \cite{jiang1} the set of achievable throughput vectors was shown to equal 
\begin{align}\label{eq:Gamma} 
\Gamma =  \left\{ \sum_{\vec x \in \Omega} \xi(\vec x) \vec x \middle| \sum_{\vec x \in \Omega} \xi(\vec x) =1, \xi(\vec x) > 0 \mbox{ for } \vec x \in \Omega \right\},
\end{align}
where $\Omega = \{(x_1,\ldots,x_n) \in \{0,1\}^n | x_i x_j = 0 \mbox{ if } (i,j) \in E\}$.
In other words, a throughput vector $\vec \theta$ is achievable if and only if it belongs to the interior of the
convex hull of the set $\Omega$.

\section{Free energy and region-based approximations}\label{sec:free}

We start with a brief introduction on (region-based) free energy  approximations and describe these in the
context of factor graphs, we refer the reader to \cite{yedidia1} for a more detailed exposition.  

A factor graph \cite{kschischang1} is a bipartite graph that contains a set of variable and
factor nodes that represents the factorization of a function. 
The factor graph associated with \eqref{eq:pf} contains a variable node for each variable $x_i$, for $i=1,\ldots,n$,
and $n+|E|$ factor nodes $f_a$: one for each factor $f_i$ and $f_{(i,j)}$. 
Further, a variable node $x_i$ is connected to a factor node $f_a$ if and only if
$x_i$ is an argument of $f_a$. As illustrated in Figure \ref{fig:factorgraph} the factor graph of \eqref{eq:pf} can be obtained from the conflict graph $G$ by labeling  node $i$ as $x_i$,
replacing each edge $(i,j)$ by a factor node $f_{(i,j)}$, by connecting $f_{(i,j)}$ to $x_i$ and $x_j$ and by adding the factor nodes $f_i$,
where $f_i$ is connected to $x_i$.  

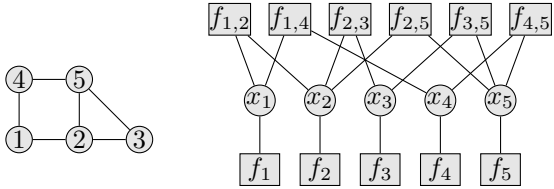
\begin{figure}[t]
\begin{center}
\begin{tikzpicture}[scale=0.8]
\tikzstyle{every node}=[circle, draw, fill=black!10,
                        inner sep=0pt, minimum width=10pt]

\node (1) at (1,0) []{$1$};
\node (2) at (2,0) []{$2$};
\node (3) at (3,0) []{$3$};
\node (5) at (2,1) []{$5$};
\node (4) at (1,1) []{$4$};

\draw [] (1) -- (2);
\draw [] (2) -- (3);
\draw [] (2) -- (5);
\draw [] (3) -- (5);
\draw [] (4) -- (5);
\draw [] (1) -- (4);

\tikzstyle{every node}=[circle, draw, fill=black!10,
                        inner sep=0pt, minimum width=10pt]

\node (f1) at (5,-0.5) [rectangle, minimum width=15pt, minimum height=12pt]{$f_1$};
\node (f2) at (6,-0.5) [rectangle, minimum width=15pt, minimum height=12pt]{$f_2$};
\node (f3) at (7,-0.5) [rectangle, minimum width=15pt, minimum height=12pt]{$f_3$};
\node (f4) at (8,-0.5) [rectangle, minimum width=15pt, minimum height=12pt]{$f_4$};
\node (f5) at (9,-0.5) [rectangle, minimum width=15pt, minimum height=12pt]{$f_5$};

\node (v1) at (5,0.65) []{$x_1$};
\node (v2) at (6,0.65) []{$x_2$};
\node (v3) at (7,0.65) []{$x_3$};
\node (v4) at (8,0.65) []{$x_4$};
\node (v5) at (9,0.65) []{$x_5$};

\node (f12) at (4.5,2) [rectangle, minimum width=15pt, minimum height=12pt]{$f_{1,2}$};
\node (f14) at (5.5,2) [rectangle, minimum width=15pt, minimum height=12pt]{$f_{1,4}$};
\node (f23) at (6.5,2) [rectangle, minimum width=15pt, minimum height=12pt]{$f_{2,3}$};
\node (f25) at (7.5,2) [rectangle, minimum width=15pt, minimum height=12pt]{$f_{2,5}$};
\node (f35) at (8.5,2) [rectangle, minimum width=15pt, minimum height=12pt]{$f_{3,5}$};
\node (f45) at (9.5,2) [rectangle, minimum width=15pt, minimum height=12pt]{$f_{4,5}$};

\draw [] (v1) -- (f1);
\draw [] (v2) -- (f2);
\draw [] (v3) -- (f3);
\draw [] (v4) -- (f4);
\draw [] (v5) -- (f5);

\draw [] (f12) -- (v1);
\draw [] (f12) -- (v2);
\draw [] (f14) -- (v1);
\draw [] (f14) -- (v4);
\draw [] (f23) -- (v3);
\draw [] (f23) -- (v2);
\draw [] (f25) -- (v5);
\draw [] (f25) -- (v2);
\draw [] (f35) -- (v3);
\draw [] (f35) -- (v5);
\draw [] (f45) -- (v4);
\draw [] (f45) -- (v5);

\end{tikzpicture}
\end{center}
\caption{Conflict graph $G$ and its associated bipartite factor graph}
\label{fig:factorgraph}
\end{figure}

Given a distribution $p$ on $\{0,1\}^n$ (as in \eqref{eq:pf}) with an associated factor graph with variable nodes $\{x_1,\ldots,x_n\}$
and factor nodes $\{f_1,\ldots,f_M\}$,
the Gibbs free energy $F(b)$, where $b$ is a distribution on $\{0,1\}^n$, is defined as $F(b) = U(b)-H(b)$, where
\[U(b) = - \sum_{x \in \{0,1\}^n} b(x) \sum_{a=1}^M \ln(f_a(x_a)), \]  
is the Gibbs average energy and the subset of the elements of $(x_1,\ldots,x_n)$ that are an argument of the function $f_a$ is 
denoted as $x_a$ and
\begin{align}\label{eq:GibbsH}
H(b) = - \sum_{x \in \{0,1\}^n} b(x) \ln(b(x)), 
\end{align} 
is the Gibbs entropy. For the factor graph of \eqref{eq:pf} we have
\begin{align*}
\sum_{a=1}^M \ln(f_a(x_a)) = \sum_{i=1}^n x_i \ln(\nu_i) +
\sum_{(i,j) \in E} \ln(1-x_i x_j). 
\end{align*}  
It is well-known that the Gibbs free energy associated with a factor graph is minimized when the distribution $b$ matches $p$.
Although the minimizer $p$ of the Gibbs free energy $F(b)$ may be known explicitly as in the CSMA setting, computing
marginal distributions of the form
\[p_S(x_S) = \sum_{x_i:i \not\in S} p(x_1,\ldots,x_n),\] 
where $S$ is a subset of $\{1,\ldots,n\}$,
is often computationally prohibitive. As such, approximations for the Gibbs free energy have been developed that
allow approximating marginal distributions of the form $p_S(x)$ at a (much) lower computational cost. 
Such approximations have also been used to attack the {\it inverse} problem which attempts to estimate
the model parameters (e.g., the back-off rates $\nu_i$ in CSMA or the couplings in the Ising model) 
given some values for some of the marginal distributions (e.g., the target throughput vector in CSMA or the
magnetizations and correlations in the Ising model). For the Bethe approximation this has led to 
explicit formulas for the approximate solution of the inverse problem for both the ideal CSMA model \cite{yun1} 
and the Ising model \cite{ricci1}. 

The class of free energy approximations that is used in this paper for the inverse problem is the class of 
region-based free energy approximations \cite{yedidia1}.
A region-based free energy approximation is characterized by a set $\mathcal{R}$ of regions and a 
counting number  $c_R$ for each $R \in \mathcal{R}$. 
Each region $R$ has an associated set of variables $\mathcal{V}_R$, which is a subset of the variable nodes in the factor graph,
and a set of factors denoted as $\mathcal{F}_R$, which is a subset of the factor nodes in the factor graph.
The following three conditions must be met for the sets $\mathcal{V}_R$ and $\mathcal{F}_R$. First, if $f_a \in \mathcal{F}_R$, then
the arguments of $f_a$ must belong to $\mathcal{V}_R$. Second, the set $\cup_{R \in \mathcal{R}} \mathcal{V}_R = \{x_1,\ldots,x_n\}$
and $\cup_{R \in \mathcal{R}} \mathcal{F}_R = \{f_1,\ldots,f_M\}$, in other words each variable node and factor node must
belong to at least one region. Third, 
the counting numbers $c_R$ are integers such that for each factor node $f_a$ and variable node $x_i$ we have 
\begin{align}\label{eq:valid}
\sum_{R \in \mathcal{R}} c_R 1_{\{f_a \in \mathcal{F}_R\}} = \sum_{R \in \mathcal{R}} c_R 1_{\{x_i \in \mathcal{V}_R\}} = 1.
\end{align}
For example for the Bethe approximation of \cite{yun1} one associates a single region $R$ with every node in the bipartite factor
graph. For the region $R$ associated with a factor node $f_a$ one sets $\mathcal{F}_R = \{f_a\}$, $\mathcal{V}_R = 
\{x_i | x_i \mbox{ is an argument of } f_a \}$ and $c_R = 1$. For the region $R$ corresponding 
to a variable node $x_i$ one sets $\mathcal{F}_R = \emptyset$,  $\mathcal{V}_R = \{x_i\}$ and
 $c_R = -d_i$, where $d_i$ is the number of neighbors of node $i$ in the conflict graph $G$
such that \eqref{eq:valid} holds. 

As in \cite{yedidia1} we denote sums of the form 
\[\sum_{x \in \{0,1\}^{|\mathcal{V}_R|}} g(x),\] 
where $g$ is a function from $\{0,1\}^{|\mathcal{V}_R|}$
to $\mathbb{R}$ and $R$ is a region, as $\sum_{x_R} g(x_R)$. 
Using this notation, the region-based free energy is a function of
 the set of beliefs $\{ b_R(x_R) | R \in \mathcal{R} \}$, where $b_R$ is a distribution on $\{0,1\}^{|\mathcal{V}_R|}$, and is
defined as 
\begin{align}\label{eq:FRbR}
F_R(\{b_R\}) = U_R(\{b_R\}) - H_R(\{b_R\}),
\end{align} 
where $U_R(\{b_R\})$
is the region-based average energy defined as
\begin{align}\label{eq:URbR}
U_\mathcal{R}(\{b_R\}) = - \sum_{R \in \mathcal{R}} c_R \sum_{x_R} b_{R}(x_R) \sum_{f_a \in \mathcal{F}_R} \ln(f_a(x_a)),
\end{align} 
and $H_R(\{b_R\})$ is the region-based entropy given by
\begin{align}\label{eq:HRbR}
H_\mathcal{R}(\{b_R\}) &= - \sum_{R \in \mathcal{R}} c_R \sum_{x_R} b_{R}(x_R) \ln(b_R(x_R)).
\end{align}
Note that the requirement that $f_a \in \mathcal{F}_R$ implies that
the arguments of $f_a$ must belong to $\mathcal{V}_R$ is necessary for \eqref{eq:URbR} to be well defined.

The beliefs $b_R(x_R)$ are used as approximations for the marginal probabilities 
$p_R(x_R) = \sum_{x_i \not\in \mathcal{V}_R} p(x_1,\ldots,x_n)$.
The approximation exists in finding the beliefs $b_R(x_R)$ such that the region-based free energy is minimized
over the set $\Delta_{\mathcal{R}}$ of {\it consistent} beliefs $b_R(x_R)$ defined as
\begin{align*}
\Delta_{\mathcal{R}} = & \left\{ \{b_R, R \in \mathcal{R} \} \middle| b_R(x_R) \geq 0, \sum_{x_R} b_R(x_R) = 1, \right. \\
&\hspace*{0.3cm} \left. \sum_{x_i \in \mathcal{V}_{R'}\setminus \mathcal{V}_R} b_{R'}(x_{R'}) = \sum_{x_j \in \mathcal{V}_R\setminus \mathcal{V}_{R'}} b_{R}(x_{R}) \right\}.
\end{align*}

We note that having a consistent set of beliefs does not imply that they are the marginals of a 
single distribution $b(x)$ on $\{0,1\}^n$ \cite[Section V.A]{yedidia1}.
Further, the average energy given by \eqref{eq:URbR} is known to be exact, that is, equal to the Gibbs free energy $U(p)$,
if $b_R(x_R) = p_R(x_R)$ for all $R$ and $x_R$. This condition is however not sufficient for 
 the region-based entropy to be exact (that is, equal to the Gibbs entropy $H(p)$) \cite{yedidia1}.

\section{Clique Belief}\label{sec:cliquebelief}
In this section we introduce the notion of clique believe and indicate how to select the
back-off rates to obtain a zero gradient point of the region-based
free energy under clique belief. These back-off rates, presented in Theorem \ref{th:backoff},
are used as an approximation for the vector of back-off rates that achieves
a given throughput vector $(\phi_1,\ldots,\phi_n)$. 

Clique belief is defined as the belief that all the nodes $i \in V$ with $x_i \in \mathcal{V}_R$ form a clique in the
conflict graph $G$ for any $R \in \mathcal{R}$, meaning the belief that any two nodes within a region $R$ are active at the same
time is zero.  More specifically, we define the set of clique beliefs $\Delta^{C}_{\mathcal{R}}$ 
as the set of beliefs $\{b_R\}$ for which  $b_R(x_R)$ has the form
\begin{align}\label{eq:cbelief}
b_R(x_R) = \left\{ \begin{array}{ll}
1-\sum_{i: x_i \in \mathcal{V}_R} \phi_i & \mbox{for } \sum_{x_i \in \mathcal{V}_R} x_i = 0,\\
\phi_i & \mbox{for }x_i = 1 \mbox{ and } \\
& \ \ \ \ \  \sum_{x_j \in \mathcal{V}_R} x_j = 1,\\
0 & \mbox{otherwise}. 
\end{array}\right.
\end{align}
for some set $\{\phi_1,\ldots,\phi_n\}$ with $\phi_i \geq 0$ and $\sum_{i:x_i \in \mathcal{V}_R} \phi_i < 1$ for all $R \in \mathcal{R}$.
Clique beliefs are clearly consistent, that is, $\Delta^{C}_{\mathcal{R}} \subseteq \Delta_{\mathcal{R}}$.

The next condition limits the set of region-based free energy approximations considered somewhat by
putting some minor conditions on the manner in which the regions are selected.

\begin{cond}\label{cond:region}
The set of regions $\mathcal{R}$ is such that $\mathcal{R} = \{R_{f_1},\ldots,R_{f_n}\} \cup \{R_{x_1},\ldots,R_{x_n}\} \cup \mathcal{R}'$ with
\begin{enumerate}
\item $\mathcal{V}_{R_{f_i}} = \mathcal{V}_{R_{x_i}} = \{x_i\}$, $\mathcal{F}_{R_{f_i}} = \{f_i\}$ and $\mathcal{F}_{R_{x_i}} = \emptyset$, 
for $i=1,\ldots,n$, 
\item for $R \in \mathcal{R}'$ we have $\emptyset \not= \mathcal{F}_R \subseteq \{f_{(i,j)} | (i,j) \in E\}$
and $\mathcal{V}_R = \{x_i | \exists j: f_{(i,j)} \in \mathcal{F}_R\}$,
\item  $c_{R_{x_i}} = 1-(1+\sum_{R \in \mathcal{R}'} c_R 1_{\{x_i \in \mathcal{V}_R\}})$ and $c_{R_{f_i}} = 1$, for $i = 1,\ldots,n$.
\end{enumerate}
\end{cond}

Note that this condition states that there are $2n$ special regions for which the set of variable nodes,
factor nodes as well as the counting numbers are fixed. For each of the remaining regions (that is, for each $R \in \mathcal{R}'$) 
the set of variable and factor nodes is determined by some nonempty set of edges, while its counting number can be chosen
arbitrarily as long as \eqref{eq:valid} holds.  

\begin{theorem}\label{th:backoff}
Let $\mathcal{R}$ be a set of regions that meets Condition \ref{cond:region}. For the zero gradient points
of the region-based free energy $F(\{b_R\})$ defined by \eqref{eq:FRbR} over the set $\Delta^{C}_{\mathcal{R}}$ of clique beliefs we have
\begin{align}\label{eq:backoff}
 \nu_i = \frac{\phi_i}{(1-\phi_i)^{1+c_{R_{x_i}}}} \prod_{\substack{R \in \mathcal{R}':\\ x_i \in \mathcal{V}_R}} \left( 1-\sum_{j:x_j \in \mathcal{V}_R} \phi_j \right)^{-c_R}.
\end{align}
\end{theorem}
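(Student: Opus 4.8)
The plan is to substitute the clique-belief form \eqref{eq:cbelief} directly into the region-based free energy \eqref{eq:FRbR}, turning $F_\mathcal{R}$ into an explicit function of the scalar parameters $\phi_1,\ldots,\phi_n$ alone, and then to locate its zero gradient points by differentiating with respect to each $\phi_i$. The key preliminary observation is that clique beliefs are freely parametrized by $(\phi_1,\ldots,\phi_n)$ with no coupling equality constraints: each $b_R$ in \eqref{eq:cbelief} is automatically normalized and the shared values $\phi_i$ already render the family consistent, so $\Delta^C_\mathcal{R}$ is (in the interior of the feasible region) an open set in the $\phi$-coordinates, with $\phi_i$ recoverable from $b_{R_{f_i}}$. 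Consequently a zero gradient point of the restriction of $F_\mathcal{R}$ to $\Delta^C_\mathcal{R}$ is simply a solution of the unconstrained system $\partial F_\mathcal{R}/\partial \phi_i = 0$ for $i=1,\ldots,n$, with the back-off rates $\nu_i$ treated as fixed model parameters during differentiation; no Lagrange multipliers for the consistency constraints are needed.

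First I would simplify the average energy $U_\mathcal{R}$ of \eqref{eq:URbR}. The factor nodes $f_{(i,j)}$ carried by the regions $R\in\mathcal{R}'$ contribute $\ln f_{(i,j)}(x_i,x_j)=\ln(1-x_ix_j)$, but on the support of a clique belief at most one variable in $\mathcal{V}_R$ equals $1$, so $x_ix_j=0$ there and the whole $\mathcal{R}'$ energy contribution vanishes. Since also $\mathcal{F}_{R_{x_i}}=\emptyset$, the only surviving energy comes from the regions $R_{f_i}$, for which $\ln f_i(x_i)=x_i\ln\nu_i$ and $c_{R_{f_i}}=1$; this yields the clean expression $U_\mathcal{R} = -\sum_{i=1}^n \phi_i\ln\nu_i$, whose $\phi_i$-derivative is simply $-\ln\nu_i$.

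Next I would evaluate the region-based entropy $H_\mathcal{R}$ of \eqref{eq:HRbR}. On a clique belief, region $R$ contributes the local entropy $h_R=-(1-\Phi_R)\ln(1-\Phi_R)-\sum_{j:x_j\in\mathcal{V}_R}\phi_j\ln\phi_j$ with $\Phi_R=\sum_{j:x_j\in\mathcal{V}_R}\phi_j$; the singleton regions $R_{f_i}$ and $R_{x_i}$ share the same local entropy $s_i=-(1-\phi_i)\ln(1-\phi_i)-\phi_i\ln\phi_i$, so they combine with total weight $1+c_{R_{x_i}}$. Differentiating gives $ds_i/d\phi_i=\ln\frac{1-\phi_i}{\phi_i}$ and, for $R\in\mathcal{R}'$ with $x_i\in\mathcal{V}_R$, $\partial h_R/\partial\phi_i=\ln\frac{1-\Phi_R}{\phi_i}$. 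Assembling $\partial F_\mathcal{R}/\partial\phi_i = -\ln\nu_i - (1+c_{R_{x_i}})\ln\frac{1-\phi_i}{\phi_i} - \sum_{R\in\mathcal{R}':x_i\in\mathcal{V}_R} c_R \ln\frac{1-\Phi_R}{\phi_i}$ and setting it to zero yields, after exponentiation,
\begin{align*}
\nu_i = \left(\frac{\phi_i}{1-\phi_i}\right)^{1+c_{R_{x_i}}} \prod_{\substack{R\in\mathcal{R}':\\ x_i\in\mathcal{V}_R}} \left(\frac{\phi_i}{1-\Phi_R}\right)^{c_R}.
\end{align*}

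The final step, which I expect to be the main bookkeeping obstacle, is to reconcile the power of $\phi_i$ in this raw expression with the single factor $\phi_i$ appearing in \eqref{eq:backoff}. Collecting exponents, the total power of $\phi_i$ is $1+c_{R_{x_i}}+\sum_{R\in\mathcal{R}':x_i\in\mathcal{V}_R}c_R$. Invoking part 3 of Condition \ref{cond:region}, namely $c_{R_{x_i}}=1-(1+\sum_{R\in\mathcal{R}'}c_R 1_{\{x_i\in\mathcal{V}_R\}})$, one finds $1+c_{R_{x_i}}=1-\sum_{R\in\mathcal{R}':x_i\in\mathcal{V}_R}c_R$, so the total exponent of $\phi_i$ collapses to exactly $1$. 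Substituting this back leaves precisely $\nu_i=\phi_i(1-\phi_i)^{-(1+c_{R_{x_i}})}\prod_{R\in\mathcal{R}':x_i\in\mathcal{V}_R}(1-\Phi_R)^{-c_R}$, which is \eqref{eq:backoff} once $\Phi_R$ is written out as $\sum_{j:x_j\in\mathcal{V}_R}\phi_j$. The only delicate points are the vanishing of the edge-factor energy on the clique support and this counting-number cancellation; the differentiation itself is routine.
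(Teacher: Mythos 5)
Your proposal is correct and follows essentially the same route as the paper's own proof: evaluate $U_\mathcal{R}$ and $H_\mathcal{R}$ under the clique-belief parametrization (the edge-factor energy vanishing on the clique support, with the $0\ln 0$ convention), differentiate with respect to each $\phi_i$, and use the counting-number relation of Condition \ref{cond:region} (equivalently \eqref{eq:valid}) to collapse the exponent of $\phi_i$ to one. The only cosmetic difference is ordering: the paper applies \eqref{eq:valid} to simplify the $\phi_i\ln\phi_i$ terms of the entropy before differentiating, whereas you carry the raw expression through and cancel at the end.
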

\begin{proof}
Under clique belief the entropy given by \eqref{eq:HRbR} equals
\begin{align*}
H_\mathcal{R}&(\{b_R\}) = - \sum_{R \in \mathcal{R}} c_R \sum_{i: x_i\in \mathcal{V}_R} \phi_i \ln(\phi_i) \nonumber \\
&- \sum_{R \in \mathcal{R}} c_R \left( 1-\sum_{i:x_i\in \mathcal{V}_R} \phi_i\right) \ln(1-\sum_{i:x_i \in \mathcal{V}_R} \phi_i) \nonumber \\
\end{align*}
due to \eqref{eq:cbelief} and \eqref{eq:valid} yields 
\begin{align}\label{eq:HRbR2}
H_\mathcal{R}&(\{b_R\}) =  - \sum_{i=1}^n \phi_i \ln(\phi_i) \nonumber\\
&- \sum_{R \in \mathcal{R}} c_R \left( 1-\sum_{i:x_i \in \mathcal{V}_R} \phi_i\right) \ln(1-\sum_{i:x_i \in \mathcal{V}_R} \phi_i).
\end{align}
To determine $U_\mathcal{R}(\{b_R\})$ first note that $\ln(f_{(i,j)}(x_i,x_j))$ is zero unless $x_i = x_j = 1$. 
When  $x_i = x_j = 1$, we have $b(x_R) = 0$ if $f_{(i,j)} \in \mathcal{F}_R$ for $R \in \mathcal{R}'$. 
Using the common convention that $0 \ln(0) = \lim_{x \downarrow 0} x \ln(x) = 0$ yields
\[- \sum_{R \in \mathcal{R'}} c_R \sum_{x_R} b_{R}(x_R) \sum_{f_a \in \mathcal{F}_R} \ln(f_a(x_a)) = 0\]
As $\ln(f_i(0))=0$, $\ln(f_i(1))=\ln(\nu_i)$ and $b_{R_{f_i}}(1) = \phi_i$, one therefore finds
\begin{align}\label{eq:URbR2}
U_\mathcal{R}(\{b_R\}) = - \sum_{i \in V} \phi_i \ln(\nu_i).
\end{align}
%$F_\mathcal{R}(\{b_R\}) = U_\mathcal{R}(\{b_R\}) - H_\mathcal{R}(\{b_R\})$ is bounded from below on $\Delta^{C}_{\mathcal{R}}$,
%because $x \ln(x)$ is bounded from below and above for $0 \leq x \leq 1$. Therefore, 
%the region-based free energy has a (local) minimum in $\Delta^{C}_{\mathcal{R}}$. 

Demanding that the partial derivatives $dF_\mathcal{R}(\{b_R\})/d\phi_i$ are equal to zero is equivalent to the requirement that
\begin{align*}
\ln(\nu_i) = &-\sum_{R \in \mathcal{R}:x_i \in \mathcal{V}_R} c_R \left( 1+\ln(1-\sum_{j:x_j\in \mathcal{V}_R} \phi_j) \right)\nonumber\\
& + 1+\ln(\phi_i). 
\end{align*}
The expression in \eqref{eq:backoff} therefore follows from \eqref{eq:valid}.
\end{proof}

Formula \eqref{eq:backoff} proposes an approximation for the back-off rates $\nu_i$ if the target throughputs of the links are
given by the vector $(\phi_1,\ldots,\phi_n)$ provided that $\sum_{i:x_i \in \mathcal{V}_R} \phi_i < 1$ for all $R \in \mathcal{R}$. 
Depending on the choice of the regions in $\mathcal{R}'$, this condition may be more restrictive than demanding that $(\phi_1,\ldots,\phi_n) \in \Gamma$. 
However for the size $k_{max}$ clique approximation introduced in the next section, the requirement $\sum_{i:x_i \in \mathcal{V}_R} \phi_i < 1$ 
holds for any $(\phi_1,\ldots,\phi_n) \in \Gamma$ as for the size $k_{max}$ clique approximation
each region $R$ corresponds to a clique and the sum of the throughputs of all the nodes belonging
to a clique is clearly bounded by one in $\Gamma$.

It is important to stress that formula \eqref{eq:backoff} often leads to a distributed computation of $\nu_i$ as node $i$ only needs to know
its own target throughput, the target throughput of any node $j$ sharing a region with $i$ (that is, any $j$
for which there exists an $R \in \mathcal{R}'$ such that $x_i,x_j \in \mathcal{V}_R$) as well as the counting numbers
$c_R$ for the regions $R$ to which it belongs. The size $k_{max}$ clique  approximation presented in the next sections is such that 
 two nodes only belong to the same region if they are neighbors in the conflict graph $G$ and the required counting numbers
can be computed from the subgraph induced by a node and its one hop neighborhood.

Thus, for the size $k_{max}$ clique approximation  a node can compute its approximate back-off rate  using information from its {\it one-hop neighbors only for any feasible throughput vector}.

\section{Size $k_{max}$ clique approximation}\label{sec:kmax}
In this section we introduce a region-based free energy approximation for general conflict graphs $G$, called
the size $k_{max}$ clique approximation and present explicit expressions for the counting numbers.
We start by considering two special cases.

\subsection{Bethe approximation}
A first special case is to define $\mathcal{R}$ such that 
Condition \ref{cond:region} is met and setting $\mathcal{R}' = \{R_{(i,j)} | (i,j) \in E\}$ such that
$V_{R_{(i,j)}} =\{x_i,x_j\}$ and $F_{R_{(i,j)}} = \{f_{(i,j)}\}$.
The associated counting numbers are $c_{R_{(i,j)}} = 1$, which implies that $c_{R_{x_i}} = -d_i$,  where $d_i$ denotes the number of
neighbors of $i$ in $G$.

The entropy given in \eqref{eq:HRbR2} therefore becomes
\begin{align*}
H_\mathcal{R}&(\{b_R\}) = - \sum_{(i,j) \in E} \left( 1 -\phi_i-\phi_j \right) \ln(1-\phi_i-\phi_j) \\ 
&+ \sum_{i=1}^n \left[(d_i-1)(1-\phi_i)\ln(1-\phi_i) - \phi_i \ln(\phi_i) \right]
\end{align*} 
and \eqref{eq:backoff} implies that the back-off rate, denoted as $\nu_i^{(2)}$, should be set as
\begin{align}\label{eq:backoffBethe}
 \nu_i^{(2)} =  \frac{\phi_i(1-\phi_i)^{d_i-1}}{\prod_{(i,j)\in E} (1-\phi_i-\phi_j)}.
\end{align}
The above expression corresponds to the Bethe approximation for CSMA networks proposed in \cite{yun1}. 
It is worth noting that \eqref{eq:backoffBethe} is 
a fixed point of the inverse belief propagation (IBP) algorithm presented in \cite{kai1}.
More specifically, the update rule in \cite[Section IV.C]{kai1} can be written as
\[\frac{m_{ji}(0)}{m_{ji}(1)} = 1 + \frac{m_{ij}(0)}{m_{ij}(1)} \frac{\phi_j}{1-\phi_j},\]
where $\phi_j$ is the target
throughput of node $j$. It is easy to check that this update rule has $m_{ij}(0)/m_{ij}(1)=(1-\phi_j)/(1-\phi_i-\phi_j)$
as a fixed point and if we plug this into Equation (8) of \cite[Section IV.C]{kai1} we obtain
\eqref{eq:backoffBethe}.   

\subsection{Triangle approximation}

A second special case, called the {\it triangle} approximation, is obtained by extending the set of regions $\mathcal{R}'$
as defined in the previous subsection with the regions $\{R_{(i,j,k)} | (i,j),(i,k),(j,k) \in E\}$ 
such that $V_{R_{(i,j,k)}} =\{x_i,x_j,x_k\}$ and $F_{R_{(i,j)}} = \{f_{(i,j)},f_{(i,k)},f_{(j,k)}\}$.
The counting numbers are now set as $c_{R_{(i,j,k)}} = 1$ and $c_{R_{(i,j)}}=1-t_{i,j}$ such that $c_{R_{x_i}}=t_i-d_i$, 
where $t_i$ and $t_{i,j}$ are the number of triangles in $G$ that contain
node $i$ and edge $(i,j)$, respectively. Note that these counting numbers obey the requirement given
in \eqref{eq:valid} as $\sum_{j \in \mathcal{N}_i} t_{i,j} = 2 t_i$.

For the triangle approximation the back-off rates, denoted as $\nu_i^{(3)}$, given by  \eqref{eq:backoff} correspond to
\begin{align}
 \nu_i^{(3)} =  \frac{\phi_i (1-\phi_i)^{d_i-1}
\prod_{(i,j)\in E} (1-\phi_i-\phi_j)^{t_{i,j}-1}}{(1-\phi_i)^{t_i}\prod_{(i,j,k)\in \Delta_E} (1-\phi_i-\phi_j-\phi_k)},
\end{align}
where $\Delta_E$ denotes the set of triangles in $E$.

\subsection{General case}
The idea behind the Bethe and triangle approximation can be generalized to cliques of larger sizes, at the
expense of an increased complexity to compute the back-off rates. 
In this section the set $\mathcal{R}'$ corresponds to the set of all the cliques $K$ in the conflict graph $G$ with
a size in $\{2,\ldots,k_{max}\}$, where $k_{max}$ is a predefined maximum allowed clique size. Note that setting $k_{max}=2$ and $3$ corresponds to the previous two approximations.
If $K$ is a clique of size $k \in \{2,3,\ldots,k_{max}\}$ and $R(K)$ its associated region, then
$V_{R(K)} = \{x_i | i \in K\}$ and $F_{R(K)} = \{f_{(i,j)} | i,j \in K\}$. 

The counting number $c_{R(K)}=1$ for any region $R(K)$ associated with a clique $K$ of size $k_{max}$.
For a region $R(K)$ corresponding to a clique $K$ of size $k$ with $1 \leq k < k_{max}$, we set $c_{R(K)}$ 
\[ c_{R(K)} = 1_{\{k > 1\}} - \sum_{ R' \in \mathcal{R}'} c_{R'} 1_{\{V_{R(K)} \subset \mathcal{V}_{R'}\} }.\]
Note, for any maximal clique $K$ of size $2 \leq k \leq k_{max}$, we have $c_{R(K)} = 1$ irrespective of its size.  
The next proposition provides an explicit expression for $c_{R(K)}$.

\begin{theorem}
If $K$ is a clique of size $k \in \{1,\ldots,k_{max}\}$ then
\begin{align}\label{prop:cRK}
c_{R(K)} = 1_{\{k > 1\}} + \sum_{s=k+1}^{k_{max}} (-1)^{s-k} n_{K,s},
\end{align}
where $n_{K,s}$ denotes the number of cliques $K'$ in $G$ with $|K'| = s$ and $K \subset K'$.
\end{theorem}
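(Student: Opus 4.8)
The recursive definition of the counting number is
\[
c_{R(K)} = 1_{\{k>1\}} - \sum_{R' \in \mathcal{R}'} c_{R'} 1_{\{\mathcal{V}_{R(K)} \subset \mathcal{V}_{R'}\}},
\]
so the natural approach is to prove the closed-form expression \eqref{prop:cRK} by downward induction on the clique size $k$, starting from $k = k_{max}$ and descending to $k=1$. I would first verify the base case $k = k_{max}$: here there are no strictly larger cliques in $\mathcal{R}'$ (whose cliques have size at most $k_{max}$), so the empty sum over $s$ from $k_{max}+1$ to $k_{max}$ vanishes and the formula reduces to $c_{R(K)} = 1_{\{k_{max}>1\}} = 1$, matching the stated value $c_{R(K)}=1$ for top-size cliques. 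The plan is then to assume \eqref{prop:cRK} holds for every clique of size strictly greater than $k$ and substitute these expressions into the recursion for a clique $K$ of size $k$.

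The heart of the argument is a counting identity. After substituting the inductive hypothesis, the sum $\sum_{R'} c_{R'} 1_{\{\mathcal{V}_{R(K)} \subset \mathcal{V}_{R'}\}}$ ranges over all cliques $K'$ with $K \subset K'$ and $|K'| = s$ for $s$ from $k+1$ to $k_{max}$, and each such $K'$ contributes $1_{\{s>1\}} + \sum_{t=s+1}^{k_{max}} (-1)^{t-s} n_{K',t}$. The key step is to reorganize this double sum. For a fixed larger clique $K''$ of size $t$ with $K \subset K''$, I would count how many intermediate cliques $K'$ satisfy $K \subseteq K' \subseteq K''$ with $|K'| = s$; since $K'$ is determined by choosing $s-k$ of the $t-k$ vertices in $K'' \setminus K$, this count is $\binom{t-k}{s-k}$. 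Collecting the coefficient of $n_{K,t}$ (equivalently, summing over all size-$t$ cliques $K''$ containing $K$) reduces the problem to evaluating the alternating binomial sum
\[
\sum_{s=k+1}^{t} (-1)^{s-k} \binom{t-k}{s-k},
\]
which by the standard identity $\sum_{r=0}^{m} (-1)^r \binom{m}{r} = 0$ (with the $r=0$ term separated out) equals $-1$ for every $t > k$. This collapses the nested alternating sums into exactly the single alternating sum appearing in \eqref{prop:cRK}, completing the induction.

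The main obstacle I anticipate is bookkeeping the interchange of summation order cleanly: one must be careful that the induction hypothesis is applied to every $K'$ in the recursion (all of which have size $> k$, so the hypothesis is available), and that the term $1_{\{s>1\}}$ contributed by each $K'$ — which equals $1$ throughout since $s \geq k+1 \geq 2$ — is summed with the correct alternating count $\sum_{s=k+1}^{k_{max}} (-1)^{\,?}\, n_{K,s}$ and combines with the indicator $1_{\{k>1\}}$ in the right way. A secondary subtlety is confirming the sign conventions match: the recursion subtracts the sum, so the induction must track how the leading $-1$ in front of the recursion's sum interacts with the $(-1)^{s-k}$ factors, and verifying that the residual alternating sum indeed produces $+(-1)^{s-k} n_{K,s}$ rather than its negative. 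Once the binomial collapse is established, the remainder is routine algebra.
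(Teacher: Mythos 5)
Your overall strategy is exactly the paper's: backward induction starting from the base case $k=k_{max}$, substitution of the inductive hypothesis into the recursion, counting the intermediate cliques $K'$ with $K \subseteq K' \subseteq K''$ via $\binom{t-k}{s-k}$, and collapsing the resulting double sum with an alternating binomial identity. However, the central identity you propose to evaluate is the wrong one, and the problem is not the ``secondary subtlety'' about the leading minus sign that you flag at the end. When you fix a clique $K''$ of size $t$ with $K \subset K''$ and collect every term of $\sum_{R'} c_{R'} 1_{\{\mathcal{V}_{R(K)} \subset \mathcal{V}_{R'}\}}$ that counts $K''$, an intermediate clique $K'$ of size $s<t$ contributes with sign $(-1)^{t-s}$ (this is the sign attached to $n_{K',t}$ in the inductive hypothesis applied to $K'$: its exponent measures the distance from $s$ up to $t$, not from $k$ up to $s$), while the $s=t$ term contributes $+1$ (the $1_{\{s>1\}}$ of $K''$ itself). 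Hence the coefficient of $n_{K,t}$ in the subtracted sum is
\[
\sum_{s=k+1}^{t} (-1)^{t-s}\binom{t-k}{s-k} \;=\; (-1)^{t-k}\sum_{r=1}^{t-k}(-1)^{r}\binom{t-k}{r} \;=\; -(-1)^{t-k},
\]
which depends on the parity of $t-k$. Your sum $\sum_{s=k+1}^{t}(-1)^{s-k}\binom{t-k}{s-k}=-1$ is a correct evaluation of a \emph{different} sum: the two integrands differ by the factor $(-1)^{t-k}$, so they agree only when $t-k$ is even.

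This is not a harmless slip, because a coefficient that is constant in $t$ cannot produce the alternating signs of \eqref{prop:cRK}: if the coefficient of $n_{K,t}$ in the subtracted sum really were $-1$ for every $t$, the recursion $c_{R(K)} = 1_{\{k>1\}} - \sum_{R'} c_{R'} 1_{\{\mathcal{V}_{R(K)} \subset \mathcal{V}_{R'}\}}$ would give $c_{R(K)} = 1_{\{k>1\}} + \sum_{s=k+1}^{k_{max}} n_{K,s}$, with no alternation at all. A concrete check: for a single vertex of a triangle (with $k_{max}=3$) the recursion gives $c_{R(K)} = -(0+0+1) = -1$, matching the theorem's $-2+1=-1$, whereas the non-alternating formula gives $+3$. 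So neither ``$+\sum (-1)^{s-k} n_{K,s}$ nor its negative'' is the issue you need to resolve; the issue is that your claimed coefficient is parity-independent. With the corrected coefficient $-(-1)^{t-k}$, the subtraction immediately yields $c_{R(K)} = 1_{\{k>1\}} + \sum_{t=k+1}^{k_{max}} (-1)^{t-k} n_{K,t}$, and the rest of your plan (base case, applicability of the hypothesis to all larger cliques, the $\binom{t-k}{s-k}$ count, the interchange of sums) goes through exactly as in the paper. The gap is thus a single wrong sign pattern at the heart of the induction step; the fix is local but essential.
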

\begin{proof}
The result clearly holds for $k=k_{max}$. 
We use backward induction on $|K|=k$ to find
\begin{align}
-&c_{R(K)}+1_{\{k > 1\}} = \nonumber\\
& \sum_{u=k+1}^{k_{max}} \sum_{\substack{cliques \ K':\\K \subset K', |K'|=u}} \left( 1  
+\sum_{s=u+1}^{k_{max}} (-1)^{s-u} n_{K',s}\right)=\nonumber \\
& \sum_{u=k+1}^{k_{max}} n_{K,u} + \sum_{u=k+1}^{k_{max}} \sum_{s=u+1}^{k_{max}} (-1)^{s-u} \hspace*{-0.6cm}
\sum_{\substack{cliques \ K':\\K \subset K', |K'|=u}}  n_{K',s} \label{eq:cRKstep1}
\end{align}
The latter sum equals $\binom{s-k}{u-k} n_{K,s}$ as we can pick $u-k$ elements from the $s-k$ elements not belonging to $K$ of a
size $s$ clique that contains $K$ to obtain a $K'$. Hence, switching sums in \eqref{eq:cRKstep1} yields
\begin{align}
-&c_{R(K)} + 1_{\{k > 1\}} =\nonumber\\
&\sum_{s=k+1}^{k_{max}} n_{K,s} + \sum_{s=k+2}^{k_{max}} \sum_{u=k+1}^{s-1} (-1)^{s-u} \binom{s-k}{u-k} n_{K,s}= \nonumber\\
& \sum_{s=k+1}^{k_{max}} \left( 1 + (-1)^{s-k}\sum_{z=1}^{s-k-1} (-1)^{-z} \binom{s-k}{z}\right)n_{K,s}\label{eq:cRKstep2}
\end{align}
By noting that $\sum_{i=0}^n (-1)^i \binom{n}{i} = 0$, \eqref{eq:cRKstep2} becomes 
\begin{align*}
&c_{R(K)} = 1_{\{k > 1\}} +\nonumber \\
&\ - \sum_{s=k+1}^{k_{max}}  \left( 1 + (-1)^{s-k}(-1-(-1)^{s-k})\right)n_{K,s}\nonumber\\
&\ = 1_{\{k> 1\}} + \sum_{s=k+1}^{k_{max}}   (-1)^{s-k}n_{K,s},
\end{align*}
as required.
\end{proof}

Note that increasing $k_{max}$ by one simply adds one additional term to $c_{R(K)}$ in \eqref{prop:cRK}, 
which allows us  to compute the back-off rates of the size $k_{max}$ clique approximation in a recursive
manner as follows.

\begin{corollary}
Let $\nu_i^{(k_{max})}$ be the back-off rate for node $i$ corresponding with the size $k_{max}$ clique approximation, then
\begin{align}\label{eq:recurs}
 &\nu_i^{(k_{max})}  = \nu_i^{(k_{max}-1)} \cdot \nonumber \\
 &\prod_{\substack{cliques \ K \ in \ G:\\ i \in K, 1 \leq |K| \leq k_{max}}} 
\left(1-\sum_{s\in K} \phi_s \right)^{-n_{K,k_{max}}(-1)^{k_{max}-|K|}},\end{align}
where $\nu_i^{(1)}= \phi_i/(1-\phi_i)$ and $n_{K,k_{max}}$ denotes the number of size $k_{max}$ cliques $K'$ in $G$ with $K \subset K'$.
\end{corollary}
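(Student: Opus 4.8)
The plan is to read off the recursion by dividing the closed-form expression \eqref{eq:backoff} for the size $k_{max}$ approximation by the same expression for the size $k_{max}-1$ approximation. Write $c_{R(K)}^{(m)}$ for the counting number that \eqref{prop:cRK} assigns to a clique $K$ in the size $m$ clique approximation, i.e.\ $c_{R(K)}^{(m)} = 1_{\{|K|>1\}} + \sum_{s=|K|+1}^{m}(-1)^{s-|K|}n_{K,s}$, and note that each quantity $n_{K,s}$ is intrinsic to $G$ and does not depend on $m$. Consequently only the upper summation limit changes with $m$, and isolating the top summand gives the telescoping identity
\[
c_{R(K)}^{(k_{max})} = c_{R(K)}^{(k_{max}-1)} + (-1)^{k_{max}-|K|}\, n_{K,k_{max}},
\]
valid for every clique $K$ with $1 \leq |K| \leq k_{max}$; in particular it holds for the singletons $K=\{i\}$, for which $c_{R(\{i\})}^{(m)} = c_{R_{x_i}}^{(m)}$ by Condition~\ref{cond:region}.

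First I would rewrite \eqref{eq:backoff} so that every factor is indexed by a clique containing $i$. Since $1-\phi_i = 1-\sum_{j\in\{i\}}\phi_j$, the prefactor $(1-\phi_i)^{-(1+c_{R_{x_i}})}$ is exactly the $|K|=1$ contribution, carrying the effective exponent $1+c_{R(\{i\})}^{(m)}$; here the extra $+1$ is the contribution of the fixed region $R_{f_i}$ (with $c_{R_{f_i}}=1$) and does not depend on $m$. Thus for $m\in\{k_{max}-1,k_{max}\}$,
\[
\nu_i^{(m)} = \phi_i \prod_{\substack{\text{cliques } K \text{ in } G:\\ i\in K,\ 1\leq |K|\leq m}} \left(1-\sum_{j\in K}\phi_j\right)^{-\tilde c_{R(K)}^{(m)}},
\]
where $\tilde c_{R(\{i\})}^{(m)} = 1+c_{R(\{i\})}^{(m)}$ and $\tilde c_{R(K)}^{(m)} = c_{R(K)}^{(m)}$ for $|K|\geq 2$. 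Because the offset $+1$ is common to both values of $m$, the telescoping identity transfers unchanged to $\tilde c$, i.e.\ $\tilde c_{R(K)}^{(k_{max})}-\tilde c_{R(K)}^{(k_{max}-1)} = (-1)^{k_{max}-|K|}n_{K,k_{max}}$ for all $K$.

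Next I would form the quotient $\nu_i^{(k_{max})}/\nu_i^{(k_{max}-1)}$. The factor $\phi_i$ cancels. Every clique $K\ni i$ with $1\leq |K|\leq k_{max}-1$ appears in both products, and the base $1-\sum_{j\in K}\phi_j$ is raised to the net exponent $-(\tilde c_{R(K)}^{(k_{max})}-\tilde c_{R(K)}^{(k_{max}-1)}) = -(-1)^{k_{max}-|K|}n_{K,k_{max}}$, which is precisely the exponent appearing in \eqref{eq:recurs}. The cliques $K\ni i$ with $|K|=k_{max}$ occur only in the numerator; each has $c_{R(K)}^{(k_{max})}=1$ and therefore contributes $(1-\sum_{j\in K}\phi_j)^{-1}$, matching the $|K|=k_{max}$ term of \eqref{eq:recurs} once one reads $n_{K,k_{max}}=1$ (the only size $k_{max}$ clique containing such a $K$ being $K$ itself). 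Collecting all these factors yields \eqref{eq:recurs}. The base case follows by setting $k_{max}=1$: then $\mathcal{R}'=\emptyset$, so $c_{R_{x_i}}=0$ and \eqref{eq:backoff} reduces to $\nu_i^{(1)}=\phi_i/(1-\phi_i)$.

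The only delicate point is the accounting of the singleton factor: one must verify that the constant $+1$ in the exponent $1+c_{R_{x_i}}$, which originates from the fixed factor region $R_{f_i}$, is identical in the two approximations and hence cancels in the ratio, and that the boundary cliques of size exactly $k_{max}$—present only in the size $k_{max}$ approximation—are correctly captured by the $|K|=k_{max}$ factor of \eqref{eq:recurs}. Once these are settled, the result is a direct substitution of the telescoping identity into \eqref{eq:backoff}.
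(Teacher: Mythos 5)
Your proof is correct and takes essentially the same route as the paper, whose entire proof is the one-line remark that the result follows by combining \eqref{eq:backoff} with \eqref{prop:cRK}; you have simply made explicit the telescoping identity for the counting numbers, the quotient $\nu_i^{(k_{max})}/\nu_i^{(k_{max}-1)}$, and the bookkeeping of the singleton and boundary factors. Your observation that the $|K|=k_{max}$ terms require reading $n_{K,k_{max}}=1$ (counting $K$ itself, i.e., non-strict inclusion) is the correct resolution of the ambiguity in the statement and is exactly what is needed for \eqref{eq:recurs} to hold.
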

\begin{proof}
The result follows from \eqref{eq:backoff} when combined with \eqref{prop:cRK}.
\end{proof}

We now briefly discuss the complexity to compute the back-off rate of node $i$ when using the $k_{max}$ clique approximation.
Node $i$ can be part of at most $\min(2^{d_i},d_i^{k_{max}-1})$ cliques $K$ with $|K| \leq k_{max}$, where $d_i$
is the number of neighbors of $i$ in $G$. This set can be computed by first listing the $d_i$ size $2$ cliques
containing $i$. Having obtained the set of size $k$ cliques that contain $i$, the set of size $k+1$ cliques is found by
considering all its one element extensions. By using an ordered list $\{i_1,\ldots,i_{k-1}\}$ of the $k-1$ other nodes 
belonging to a size $k$ clique containing $i$, only one element extensions with a node $j > i_{k-1}$ need to be considered
and the creation of identical cliques of size $k+1$ is avoided. Having obtained the list of cliques $K$ that contain $i$
with $|K| \leq k_{max}$, the back-off rate given by \eqref{eq:recurs} can be readily computed by noting that 
\begin{align*}
 &\nu_i^{(k_{max})}  = \nu_i^{(k_{max}-1)} \prod_{\substack{cliques \ K' \ in \ G:\\ i \in K', |K'| = k_{max}}}  \cdot \nonumber \\
 & \prod_{K \subseteq K': i \in K} 
\left(1-\sum_{s\in K} \phi_s \right)^{(-1)^{k_{max}-|K|+1}}.
\end{align*}

\section{Kikuchi approximations}\label{sec:kikuchi}
The IGBP algorithm of \cite{kai1} is a message passing algorithm to estimate the
back-off rates to achieve a given throughput vector $(\phi_1,\ldots,\phi_n) \in \Gamma$.
This algorithm is based on a so-called Kikuchi free energy approximation. In this section 
we show that the size $k_{max}$ clique approximation also coincides with a Kikuchi approximation.
In fact for $k_{max}=n$ this Kikuchi approximation corresponds to the one associated to the IGBP
algorithm.  As such the expression for the back-off rates of the size $n$ clique
approximation gives us an explicit expression for a fixed point of the IGBP algorithm 
in \cite[Section VI.B]{kai1} due to \cite[Section VII]{yedidia1} (as the Bethe approximation did for the IBP algorithm). 

In a Kikuchi approximation (see \cite[Appendix B]{yedidia1} for more details) 
the set of regions $\tilde{\mathcal{R}}$ can be written as $\tilde{\mathcal{R}} = \bigcup_{i=0}^s \mathcal{R}_i$, for some $s$. 
We state that a region $R$ is a subset of a region $R'$ if $\mathcal{V}_R \subseteq \mathcal{V}_{R'}$ and $\mathcal{F}_R \subseteq F_{R'}$.
The regions in $\mathcal{R}_0$ fully characterize a Kikuchi approximation as follows. 
The regions in $\mathcal{R}_{i+1}$, for $i=0,\ldots,s$, are constructed from the sets $\mathcal{R}_0,\ldots,\mathcal{R}_i$ by taking all
the different intersections $R_i \cap R_j \not= \emptyset$, with $R_i \not\subseteq R_j$ and $R_j \not\subseteq R_i$,
of the regions $R_i \in \mathcal{R}_i$ with the regions $R_j \in \bigcup_{k=0}^i \mathcal{R}_k$ and subsequently removing the
sets $R \in \mathcal{R}_{i+1}$ for which there exists an $R' \in \mathcal{R}_{i+1}$ with $R \subseteq R'$. 
Note, $s$ is the smallest integer such that $\mathcal{R}_{s+1}$ is empty.
The counting number $\tilde c_R$ of region $R \in \mathcal{R}_i$ in a Kikuchi approximation is given by 
\[\tilde c_R = 1-\sum_{R' \in \tilde{\mathcal{R}}: R \subset R'} \tilde c_{R'} = 1-\sum_{R' \in \cup_{k=0}^{i-1}
\mathcal{R}_k: R \subset R'} \tilde c_{R'},\] 
as a region $R \in \mathcal{R}_i$ cannot be a subset of a region  $R' \in \mathcal{R}_j$ with $j \geq i$
(since this would imply the existence of a superset of $R$ in $\mathcal{R}_i$). 

\begin{theorem}\label{th:Kikuchi}
The size $k_{max}$ clique approximation coincides with a Kikuchi approximation with $\mathcal{R}_0 = \{R_{f_1},\ldots,R_{f_n}\} 
\cup \{R(K) | K \in \mathcal{K}_G(k_{max})\}$, where $\mathcal{K}_G(k_{max})$ is the union of the set of all the cliques of size $k_{max}$
and the set of the maximal cliques of size $k \in \{2,\ldots, k_{max}-1\}$ in $G$. 
\end{theorem}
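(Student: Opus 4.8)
The plan is to show that the Kikuchi construction seeded by $\mathcal{R}_0$ reproduces exactly the regions and counting numbers of the size $k_{max}$ clique approximation. I would first argue that every region generated by the Kikuchi intersection process is of the form $R(K)$ for some clique $K$ in $G$, and conversely that every clique $K$ with $1 \leq |K| \leq k_{max}$ arises. The key observation is that the regions in $\mathcal{R}_0$ are clique regions $R(K)$ (together with the singleton factor regions $R_{f_i}$), and that the intersection of two clique regions $R(K)$ and $R(K')$ is again a clique region, since $\mathcal{V}_{R(K)} \cap \mathcal{V}_{R(K')} = \{x_i \mid i \in K \cap K'\}$ and $K \cap K'$ is itself a clique (a subset of a clique is a clique). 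Thus the family of regions is closed under intersection, and by induction on the Kikuchi level $i$ every $R \in \mathcal{R}_i$ equals $R(K)$ for some clique $K$.

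Next I would establish that the set of cliques $K$ appearing as regions is precisely all cliques of size $1 \leq |K| \leq k_{max}$. The seeds give all size-$k_{max}$ cliques and all maximal cliques of size $2 \leq k < k_{max}$. For a non-maximal clique $K$ of size $k < k_{max}$, I would show it appears as an intersection: since $K$ is non-maximal it is contained in at least one larger clique, and intersecting two distinct maximal (or size-$k_{max}$) cliques that both contain $K$ produces a clique region with vertex set $K$ or a superset that still contains $K$; iterating the intersection process down the lattice of cliques recovers $K$ itself. I would also need to confirm the singleton regions $R(\{i\}) = R_{x_i}$ appear (as intersections of edge-regions sharing node $i$) and that the factor regions $R_{f_i}$ contribute their factor $f_i$, matching the $\mathcal{V}_{R_{f_i}} = \mathcal{V}_{R_{x_i}} = \{x_i\}$ structure in Condition \ref{cond:region}. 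This reconciles the Kikuchi region set with $\mathcal{R} = \{R_{f_i}\} \cup \{R_{x_i}\} \cup \mathcal{R}'$.

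Finally I would match the counting numbers. The Kikuchi rule $\tilde c_R = 1 - \sum_{R' : R \subset R'} \tilde c_{R'}$ is, for a clique region $R(K)$, identical to the defining recursion $c_{R(K)} = 1_{\{k>1\}} - \sum_{R' \in \mathcal{R}'} c_{R'} 1_{\{\mathcal{V}_{R(K)} \subset \mathcal{V}_{R'}\}}$ used in Section \ref{sec:kmax}, provided the two notions of "superset" agree and the $1_{\{k>1\}}$ versus $1$ discrepancy for singletons is handled by the factor/variable region split. Since both recursions start from $\tilde c = c = 1$ on the top (size-$k_{max}$ and maximal) regions and recurse downward over the same poset of clique regions, a backward induction on $|K|$ shows $\tilde c_{R(K)} = c_{R(K)}$ for every clique region, and hence the closed form \eqref{prop:cRK} applies to both.

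I expect the main obstacle to be the careful bookkeeping at the bottom of the clique lattice, namely the singleton level: reconciling the Kikuchi counting number of a node region with the split into $R_{f_i}$ (with $c_{R_{f_i}} = 1$) and $R_{x_i}$ (with $c_{R_{x_i}} = 1 - (1 + \sum_{R \in \mathcal{R}'} c_R 1_{\{x_i \in \mathcal{V}_R\}})$) in Condition \ref{cond:region}, and verifying that the factor $f_i$ is correctly assigned so that the validity condition \eqref{eq:valid} holds simultaneously for variable and factor nodes. The clique-closure-under-intersection argument and the counting-number recursion are both clean; the delicate part is confirming that the Kikuchi process, which operates on generic regions, generates exactly the singleton regions with the right factor content and does not spuriously omit any non-maximal clique of intermediate size.
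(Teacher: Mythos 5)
Your first step (every Kikuchi region is a clique region, since the intersection of two cliques is a clique) matches the paper's opening observation, but your converse claim --- that every clique $K$ with $1 \leq |K| \leq k_{max}$ arises as a Kikuchi region --- is false, and the paper explicitly notes this (``not every clique $K$ of size $k \in \{2,\ldots,k_{max}-1\}$ is necessarily a region in the Kikuchi approximation''). Two things break. First, a non-maximal clique may be contained in only \emph{one} region of $\mathcal{R}_0$: take $G$ to be the triangle $\{1,2,3\}$ plus a vertex $4$ adjacent to $1$ and $2$, with $k_{max}=3$; the seeds are the two triangles $\{1,2,3\}$ and $\{1,2,4\}$, and the edge $\{1,3\}$ lies only in $\{1,2,3\}$, so it can never be produced as an intersection $R_i \cap R_j$ with $R_i \not\subseteq R_j$ and $R_j \not\subseteq R_i$. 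Second, even for cliques lying in several seeds, the construction of $\mathcal{R}_{i+1}$ deletes every intersection that is contained in another intersection at the same level, so ``iterating the intersection process down the lattice'' does not recover all sub-cliques; it only yields the maximal intersections of previously generated regions. Consequently your final step fails as stated: the two counting-number recursions do not run over ``the same poset of clique regions,'' so the backward induction on $|K|$ cannot be set up the way you describe.

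The missing idea, which is the core of the paper's proof, is that coincidence of the two approximations does not require equal region sets: it suffices to show (a) every clique region $R(K)$ absent from the Kikuchi set $\tilde{\mathcal{R}}$ has counting number $c_{R(K)}=0$ in the size $k_{max}$ clique approximation (a region with zero counting number contributes nothing to \eqref{eq:URbR} or \eqref{eq:HRbR} and may be dropped), and (b) on the common regions the counting numbers agree, $c_{R(K)} = \tilde c_{R(K)}$. In the example above, the edge $\{1,3\}$ lies in exactly one triangle, so $c_{R(\{1,3\})} = 1 - 1 = 0$, consistent with its absence from $\tilde{\mathcal{R}}$, whereas the edge $\{1,2\}$ lies in two triangles and gets $c = \tilde c = -1$ in both approximations. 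Proving (a) and (b) requires a genuinely intertwined induction: the paper introduces, for each missing region $R(K)$, the level $f(K)$ at which its unique minimal Kikuchi superset $S(K)$ appears, and then inducts jointly on $f(K)$ and on $|S(K)|-|R(K)|$, showing that the sums of counting numbers over missing regions telescope to zero so that the Kikuchi recursion restricted to the surviving regions reproduces the clique recursion. This bookkeeping, not the clique-closure observation, is the substance of the theorem, and it is what your outline lacks.
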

\begin{proof}
See Appendix \ref{app:Kikuchi_proof}.
\end{proof}
  
The above theorem shows that the maximal clique based Kikuchi approximation considered in \cite{swamy_SPCOM, swamy_arXiv2017} coincides 
with the size $n$ clique approximation. We note that no explicit expression for the counting numbers or
a recursive scheme similar to \eqref{eq:recurs} is presented in \cite{swamy_SPCOM, swamy_arXiv2017}.

\section{Chordal conflict graphs}\label{sec:chordal}
In this section we establish two results: (a) we show that the exact explicit expressions for the
back-off rates for chordal conflict graphs, presented in \cite{vanhoudt_ton17}, corresponds to 
a zero gradient point of a region-based free energy approximation defined for chordal conflict graphs only and 
(b) we prove that the size $n$ clique approximation coincides with this chordal free energy approximation.
This implies that the size $n$ clique approximation (and therefore also a fixed point of the
IGBP algorithm of \cite{kai1}) provide exact results for chordal conflict graphs $G$.

A graph $G$ is chordal if and only if all cycles consisting of more than $3$ nodes have a {\it chord}. A chord of a cycle 
is an edge joining two nonconsecutive nodes of the cycle. Let $\mathcal{K}_G = \{K_1,\ldots,K_m\}$ be the set of maximal cliques of $G$. 
A clique tree $T=(\mathcal{K}_G,\mathcal{E})$ is a tree in which the nodes correspond
to the maximal cliques and the edges are such that the subgraph of $T$ induced by the maximal cliques that contain the node $v$
is a subtree of $T$ for any $v \in V$. A graph $G$ is chordal if and only if it has at least one clique tree 
(see Theorem 3.1 in \cite{blair1}).

For chordal conflict graphs $G$ we can define a region-based free energy approximation, called the
{\it chordal} region-based free energy approximation, by making use of {\it any}
clique tree $T = (\mathcal{K}_G,\mathcal{E})$ of $G$ in the following manner.
We define a set $\mathcal{R}$ containing 
$2n+2|\mathcal{K}_G|-1$ regions: one region $R_K$ for each maximal clique $K \in \mathcal{K}_G$, one region $R_{(K,K')}$ for each edge $(K,K') \in 
\mathcal{E}$, one region $R_{f_i}$ for each factor node $f_i$ and one region $R_{x_i}$ for each variable node $x_i$. 
Let $\mathcal{V}_R$ and $\mathcal{F}_R$ denote the set of variable and factor nodes associated with region $R \in \mathcal{R}$, then
$V_{R_K} = \{x_i | i \in K\}$ and 
\begin{align*}
F_{R_K} &= \{f_{(i,j)}| i,j \in K\},
\end{align*}
for $K \in \mathcal{K}_G$, $V_{R_{(K,K')}} =\{x_i | i \in K \cap K'\}$ and
\begin{align*}
F_{R_{(K,K')}} &= \{f_{(i,j)}| i,j \in K \cap K'\},
\end{align*}
for $(K,K') \in \mathcal{E}$ and $V_{R_{x_i}}  = V_{R_{f_i}} =\{x_i\}$,  $F_{R_{x_i}} =\emptyset$ and $F_{R_{f_i}} =\{f_i\}$.
The counting numbers $c_R$ are defined as follows: $c_{R_K} = c_{R_{f_i}} =1$ and $c_{R_{(K,K')}} = c_{R_{x_i}} = -1$.

Note the set of regions $\mathcal{R}$ fulfills Condition \ref{cond:region}, therefore under clique belief we have
$U_\mathcal{R}(\{b_R\}) = - \sum_{i \in V} \phi_i \ln(\nu_i)$.
As the nodes in $K$ and $K\cap K'$ form a clique, the clique belief matches the exact marginal probabilities $p_R(x_R)$
for each $R \in \mathcal{R}$ when $\phi_i = \sum_{x \in \{0,1\}^n} p(x) 1_{\{x_i = 1\}}$ for $i \in V$. 

As the believes $b_{R_{f_i}}(x_i)$ and $b_{R_{x_i}}(x_i)$ are the same and $c_{R_{f_i}} = -c_{R_{x_i}}$ these
regions cancel each other in the expression for the entropy. Thus for the entropy we have
\begin{align}\label{eq:HRbRchor}
H_\mathcal{R}&(\{b_R\}) = - \sum_{i \in V} \phi_i \ln(\phi_i) \nonumber \\
&+ \sum_{(K,K') \in \mathcal{E}} \left(1-\sum_{s \in K\cap K'} \phi_s \right) \ln(1-\sum_{s \in K\cap K'} \phi_s ) \nonumber \\ 
&- \sum_{K \in \mathcal{K}_G} \left(1-\sum_{s \in K} \phi_s \right) \ln(1-\sum_{s \in K} \phi_s ).
\end{align}
As noted before, even when the believes are equal to the exact marginal probabilities, 
the region-based entropy is in general not exact. Below we prove that the entropy (and therefore
also the energy) is exact in this particular case by leveraging existing results on the junction 
graph approximation method \cite[Appendix A]{yedidia1}.

\begin{theorem}\label{th:chordal}
The expression for the region-based entropy $H_\mathcal{R}(\{b_R\})$ given by \eqref{eq:HRbRchor} 
is equal to the Gibbs entropy $H(p)$ defined by \eqref{eq:GibbsH}. Further,  $H(p) = - \ln(\frac{1}{Z}\prod_{i\in V} \nu_i^{p_i(1)})$ and
if $x \in \{0,1\}^n$ such that $x_ix_j = 0$ if $(i,j) \in E$ ($p(x)=0$ otherwise), we have 
\begin{align}\label{eq:px}
 p&(x) = \frac{\prod_{K \in \mathcal{K}_G} (\vartheta_1(K)+\vartheta_0(K))}{\prod_{(K,K') \in \mathcal{E}}
(\vartheta_1(K\cap K')+\vartheta_0(K\cap K'))}, 
\end{align}
where 
$\vartheta_1(S) = \sum_{i\in S} p_i(1) 1_{\{x_i=1, \sum_{i \in S} x_i=1\}}$
and
$\vartheta_0(S) = \left(1-\sum_{s\in S} p_s(1)\right) 1_{\{\sum_{i \in S} x_i=0\}}$.
\end{theorem}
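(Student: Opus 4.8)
The plan is to prove the three assertions in the order \eqref{eq:px} first, then the entropy identity $H_\mathcal{R}(\{b_R\})=H(p)$, then the closed form for $H(p)$, since the latter two build on the factorization. Throughout I use that here $\phi_i=p_i(1)$, so that under clique belief each $b_R$ equals the exact marginal $p_R$. I would establish \eqref{eq:px} as follows. The distribution $p$ of \eqref{eq:pf} has potentials supported on the vertices and edges of $G$, hence factorizes over the cliques of the chordal graph $G$ and is a decomposable (junction-tree) model relative to any clique tree $T=(\mathcal{K}_G,\mathcal{E})$. The standard clique-tree factorization then gives
\[ p(x) = \frac{\prod_{K \in \mathcal{K}_G} p_K(x_K)}{\prod_{(K,K') \in \mathcal{E}} p_{K\cap K'}(x_{K\cap K'})}, \]
where $p_S$ is the marginal of $p$ on $S$. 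It remains to identify these marginals: since the nodes of any clique (or separator) $S$ are pairwise adjacent, at most one of them is active in a state with $p(x)>0$, so $p_S(x_S)$ equals $1-\sum_{s\in S}\phi_s$ when no node of $S$ is active, equals $\phi_i$ when exactly node $i\in S$ is active, and is $0$ otherwise. This is precisely $\vartheta_1(S)+\vartheta_0(S)$, which yields \eqref{eq:px}.

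Next I would prove the entropy identity by substituting the factorization into \eqref{eq:GibbsH}. Taking logarithms turns the product into a sum of clique and separator log-marginals, and marginalizing reduces the Gibbs entropy to
\[ H(p) = -\sum_{K \in \mathcal{K}_G}\sum_{x_K} p_K(x_K)\ln p_K(x_K) + \sum_{(K,K')\in\mathcal{E}}\sum_{x_{K\cap K'}} p_{K\cap K'}(x_{K\cap K'})\ln p_{K\cap K'}(x_{K\cap K'}). \]
Using the clique-belief form of $p_S$ derived above, each $\sum_{x_S} p_S\ln p_S$ equals $\sum_{i\in S}\phi_i\ln\phi_i + (1-\sum_{i\in S}\phi_i)\ln(1-\sum_{i\in S}\phi_i)$. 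The $(1-\sum\phi)\ln(1-\sum\phi)$ contributions over cliques (with a minus sign) and separators (with a plus sign) then reproduce verbatim the last two lines of \eqref{eq:HRbRchor}. The decisive step is the coefficient of each $\phi_i\ln\phi_i$: it appears with a minus sign once per maximal clique containing $i$ and with a plus sign once per separator containing $i$. By the running-intersection (subtree) property of the clique tree, the maximal cliques containing $i$ induce a subtree of $T$, so if there are $m_i$ of them there are exactly $m_i-1$ separators containing $i$; the net coefficient is therefore $-m_i+(m_i-1)=-1$ and these terms collapse to $-\sum_i\phi_i\ln\phi_i$, matching the first line of \eqref{eq:HRbRchor}. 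This is the step I expect to be the main obstacle, not for its computational difficulty but because it is precisely where chordality (the existence of a clique tree with the running-intersection property) is used and where the cancellation making the region-based entropy exact actually occurs. As a cleaner alternative I would note that $\mathcal{R}$ is a junction graph, namely the clique tree $T$ together with the $R_{f_i}$ and $R_{x_i}$ regions, which cancel in pairs since $c_{R_{f_i}}=-c_{R_{x_i}}$ and their beliefs coincide, so exactness follows directly from the junction-graph results in \cite[Appendix A]{yedidia1}.

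Finally, for the closed form $H(p)=-\ln(\frac{1}{Z}\prod_i\nu_i^{p_i(1)})$ I would evaluate the Gibbs average energy in two ways. On the support of $p$ every edge factor satisfies $x_ix_j=0$, so $\ln f_{(i,j)}(x_i,x_j)=0$ and only the node terms survive, giving $U(p)=-\sum_x p(x)\sum_i x_i\ln\nu_i=-\sum_{i\in V}p_i(1)\ln\nu_i$. On the other hand, rearranging \eqref{eq:pf} gives $\sum_a\ln f_a(x_a)=\ln(Zp(x))$ on the support, so $U(p)=-\sum_x p(x)\ln(Zp(x))=-\ln Z+H(p)$, i.e. $H(p)=U(p)+\ln Z$. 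Substituting the first expression for $U(p)$ yields $H(p)=-\sum_i p_i(1)\ln\nu_i+\ln Z=-\ln(\frac{1}{Z}\prod_i\nu_i^{p_i(1)})$, as required.
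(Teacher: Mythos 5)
Your proof is correct, and it reaches the theorem by a route that genuinely differs from the paper's in two of its three steps. The paper does not invoke the classical theory of decomposable graphical models: it organizes the $2n+2|\mathcal{K}_G|-1$ regions of the chordal approximation into an explicit junction graph (large vertices for the maximal cliques and for the factors $f_i$, small vertices of degree two for the separators $K \cap K'$ and for the variables $x_i$), observes that this junction graph is a tree because the clique tree $T$ is, and then cites \cite[Appendix A]{yedidia1} to obtain \emph{simultaneously} the exactness of the region-based entropy and the factorization \eqref{eq:pxgen}, from which \eqref{eq:px} follows. You instead take the clique-tree factorization of a distribution that factorizes over a chordal graph as your external black box, identify the clique and separator marginals explicitly, and then prove the entropy identity by a direct computation: expanding $\ln p(x)$ inside \eqref{eq:GibbsH} and counting coefficients, with the running-intersection property supplying the cancellation $-m_i + (m_i - 1) = -1$ for each $\phi_i \ln(\phi_i)$ term. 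That hand computation is the main thing your approach buys: it makes visible exactly where chordality is used and why the region-based entropy is exact here, whereas the paper delegates precisely this point to \cite{yedidia1}; your ``cleaner alternative'' remark (viewing $\mathcal{R}$ as a tree junction graph with the $R_{f_i}$/$R_{x_i}$ pairs cancelling) is in fact the paper's actual proof. Your final step, computing $U(p)$ in two ways to get $H(p) = \ln(Z) - \sum_{i \in V} p_i(1)\ln(\nu_i)$, is the same as the paper's, except that you derive $F(p) = -\ln(Z)$ directly instead of quoting it. One point you should make explicit: the junction-tree factorization is usually stated for strictly positive distributions, and $p$ here vanishes outside $\Omega$. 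The argument still goes through because the theorem restricts to $x$ in the support of $p$, where every marginal you divide by is positive (each $\phi_i = p_i(1) > 0$, and for any clique or separator $S$ one has $1 - \sum_{s \in S} \phi_s = p_S(0,\ldots,0) \geq p(0,\ldots,0) = 1/Z > 0$), but this deserves a sentence in your write-up.
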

\begin{proof}
See Appendix \ref{app:chordal_proof}.
\end{proof}

\begin{corollary}
For a chordal conflict graph $G$ with clique tree $(\mathcal{K}_G,\mathcal{E})$, the normalizing constant $Z$ is given by
\begin{align}\label{eq:Z}
 Z = \frac{\prod_{(K,K') \in \mathcal{E}}  \left(1-\sum_{s\in K\cap K'} p_s(1)\right)}{\prod_{K \in \mathcal{K}_G} \left(1-\sum_{s\in K} p_s(1)\right)}, 
\end{align}
and the back-off rate $\nu_i$ obeys
\begin{align}\label{eq:nui}
 \nu_i = p_i(1) \frac{\prod_{(K,K') \in \mathcal{E}: i \in K \cap K'}  \left(1-\sum_{s\in K\cap K'} p_s(1)\right)}{\prod_{K \in \mathcal{K}_G: i \in K} 
\left(1-\sum_{s\in K} p_s(1)\right)}, 
\end{align}
where the marginal probability $p_i(1)$ is the throughput of link $i$.
\end{corollary}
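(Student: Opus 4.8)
The plan is to extract both closed forms directly from the explicit product representation \eqref{eq:px} established in Theorem \ref{th:chordal}, by evaluating it at two judiciously chosen admissible states and comparing with the product-form law \eqref{eq:pf}. Since \eqref{eq:pf} reads $p(x)=\frac{1}{Z}\prod_i \nu_i^{x_i}\prod_{(i,j)\in E}(1-x_ix_j)$, any admissible state with very few active links produces an elementary relation tying the $\nu_i$ and $Z$ on the one side to the clique and separator marginals appearing in \eqref{eq:px} on the other.

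First I would establish \eqref{eq:Z} by evaluating \eqref{eq:px} at the all-inactive state. There $\sum_{i\in S}x_i=0$ for every set $S$, so each $\vartheta_1(S)$ vanishes and each $\vartheta_0(S)=1-\sum_{s\in S}p_s(1)$; hence \eqref{eq:px} collapses to $\prod_{K\in\mathcal{K}_G}(1-\sum_{s\in K}p_s(1))\big/\prod_{(K,K')\in\mathcal{E}}(1-\sum_{s\in K\cap K'}p_s(1))$. On the other hand \eqref{eq:pf} gives $p=1/Z$ at this state, since all $\nu_i^{0}=1$ and all conflict factors equal one. Equating and inverting yields exactly \eqref{eq:Z}.

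Next I would obtain \eqref{eq:nui} by evaluating \eqref{eq:px} at the state in which only link $i$ is active, which is admissible because a single active link conflicts with no one. For every set $S$ containing $i$ one has $\sum_{s\in S}x_s=1$, so $\vartheta_1(S)=p_i(1)$ and $\vartheta_0(S)=0$; for every $S$ not containing $i$ one has $\vartheta_1(S)=0$ and $\vartheta_0(S)=1-\sum_{s\in S}p_s(1)$. Consequently the numerator of \eqref{eq:px} contributes a factor $p_i(1)$ for each maximal clique containing $i$ and a separator/clique factor for each one not containing $i$, and likewise for the denominator; comparing with $p=\nu_i/Z$ from \eqref{eq:pf} expresses $\nu_i/Z$ as a ratio of powers of $p_i(1)$ multiplied by the clique and separator factors that exclude $i$.

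The single structural input is the subtree (running-intersection) property of the clique tree $(\mathcal{K}_G,\mathcal{E})$: the maximal cliques containing $i$ induce a subtree, so if $m_i$ of them contain $i$ then precisely $m_i-1$ tree edges $(K,K')$ satisfy $i\in K\cap K'$. This reduces the powers of $p_i(1)$ to $p_i(1)^{m_i-(m_i-1)}=p_i(1)$, and, after multiplying $\nu_i/Z$ by the value of $Z$ from \eqref{eq:Z}, it lets the factors over cliques and separators \emph{not} containing $i$ cancel against the corresponding factors of $Z$, leaving exactly the products over $\{K:i\in K\}$ and $\{(K,K'):i\in K\cap K'\}$ in \eqref{eq:nui}. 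I expect this bookkeeping — keeping track of which clique and separator factors survive the cancellation — to be the only delicate point; everything else is a direct substitution, and the identity $H(p)=-\ln(\frac{1}{Z}\prod_{i}\nu_i^{p_i(1)})$ from Theorem \ref{th:chordal} provides a convenient consistency check on the result.
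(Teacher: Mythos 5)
Your proposal is correct and takes essentially the same route as the paper's own proof: equate \eqref{eq:pf} with \eqref{eq:px} at the all-inactive state to read off $Z$, then at the state where only link $i$ is active to read off $\nu_i$. The only difference is that you spell out the bookkeeping the paper leaves implicit in ``found in the same way'' --- namely that the running-intersection property gives exactly $m_i-1$ tree edges whose separator contains $i$, so the powers of $p_i(1)$ reduce to one and the factors not involving $i$ cancel against $Z$.
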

\begin{proof}
The expression for $Z$ follows from equating \eqref{eq:pf} and \eqref{eq:px} with $x=(0,\ldots,0)$. 
The back-off rate $\nu_i$ is found in the same way using $x$ with $x_i=1$ and $x_j = 0$ for $i \not= j$.  
\end{proof}

Note the above formula for the back-off rate $\nu_i$ for chordal conflict graphs $G$ was derived earlier 
in \cite{vanhoudt_ton17} and corresponds to \eqref{eq:backoff}.

\begin{theorem}\label{th:coin}
When the conflict graph $G$ is chordal the Kikuchi approximation with $\mathcal{R}_0 = \{R_{f_1},\ldots,R_{f_n}\} 
\cup \{R(K) |K \in \mathcal{K}_G\}$, where $\mathcal{K}_G$ is the set of the maximal cliques in $G$,
coincides with the chordal region-based free energy approximation (defined for chordal conflict graphs only). 
\end{theorem}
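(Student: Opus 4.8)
The plan is to show that the two region-based free energy approximations induce \emph{identical} free energy functionals, which under clique belief reduces to matching their energies and their entropies. Both region families satisfy the validity condition \eqref{eq:valid}: for the chordal approximation this is noted explicitly (it fulfills Condition \ref{cond:region}), and for the Kikuchi approximation it follows either from its defining relation $\sum_{R' \supseteq R} \tilde c_{R'} = 1$ or, via Theorem \ref{th:Kikuchi} with $k_{max}=n$, from its identification with the size-$n$ clique approximation, which obeys Condition \ref{cond:region}. Hence, exactly as in \eqref{eq:URbR2}, the clique-belief energy equals $-\sum_i \phi_i \ln \nu_i$ in both cases, so only the entropies must be matched. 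By \eqref{eq:HRbR2} the clique-belief entropy depends on the regions only through the \emph{aggregate counting numbers} $C(S) = \sum_{R:\,\mathcal{V}_R = \{x_i : i \in S\}} c_R$, via $H = -\sum_i \phi_i\ln\phi_i - \sum_S C(S)\,(1-\phi(S))\ln(1-\phi(S))$ with $\phi(S)=\sum_{i\in S}\phi_i$. Thus it suffices to prove $C^{\mathrm{Kik}}(S)=C^{\mathrm{chor}}(S)$ for every vertex set $S$.

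First I would unravel the Kikuchi region set. Since every intersection of maximal cliques is again a clique and its factor set (all edges inside it) is determined by its vertex set, the regions generated from the seeds $\{R(K)\}$ are in bijection with the distinct clique intersections $S$, giving a single region $R_S$ per such $S$; intersecting a seed $R_{f_i}$ with $R(K)$ merely reproduces the degenerate region $R_{x_i}$ (variables $\{x_i\}$, no factors). Consequently, for $|S|\ge 2$ the only Kikuchi regions containing $R_S$ are the $R_T$ with $T \supseteq S$, and the defining Möbius relation gives $\sum_{T\supseteq S} C^{\mathrm{Kik}}(T) = 1$ (one region per vertex set of size $\ge 2$, so $C^{\mathrm{Kik}}(T)=\tilde c_{R_T}$).

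On the chordal side I would read $C^{\mathrm{chor}}$ directly off the region list: $C^{\mathrm{chor}}(K)=1$ for a maximal clique $K$, $C^{\mathrm{chor}}(S)=-m(S)$ for a separator $S$, where $m(S)$ is the number of clique-tree edges $(K,K')$ with $K\cap K' = S$ (each contributing $c_{R_{(K,K')}}=-1$), and $C^{\mathrm{chor}}(S)=0$ otherwise. The key structural input is the subtree (running-intersection) property of the clique tree: for any clique $S$ the maximal cliques containing $S$ induce a subtree $T_S$, say with $t(S)$ nodes and hence $t(S)-1$ edges, and a clique-tree edge has separator $\supseteq S$ precisely when both endpoints contain $S$, i.e. precisely when it lies in $T_S$. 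Summing over $T\supseteq S$ therefore yields $\sum_{T\supseteq S} C^{\mathrm{chor}}(T) = t(S)-(t(S)-1) = 1$, the same relation satisfied by $C^{\mathrm{Kik}}$. Since both equal $1$ on the maximal cliques (the maximal elements of the intersection poset) and satisfy the identical triangular system $\sum_{T\supseteq S} C(T)=1$, a downward induction on $|S|$ forces $C^{\mathrm{Kik}}(S)=C^{\mathrm{chor}}(S)$ for all $S$ with $|S|\ge 2$; the singleton values agree as well, because validity \eqref{eq:valid} gives $C(\{i\}) = 1 - \sum_{S\ni i,\,|S|\ge 2} C(S)$ for both approximations.

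The main obstacle is structural bookkeeping rather than a hard estimate: I must verify that the Kikuchi region generation really produces exactly the intersection closure of the seeds (one region per clique intersection, the factor regions contributing only the degenerate $R_{x_i}$), and I must establish the running-intersection statement that the cliques containing a fixed $S$ form a subtree whose $t(S)-1$ edges are exactly the clique-tree edges with separator containing $S$. Once these are in place the a priori worrying higher-order intersection regions become harmless automatically: the shared recursion forces $C^{\mathrm{Kik}}(S)=0$ on every vertex set that is neither a maximal clique nor a separator, so those regions drop out of the free energy and the Kikuchi functional collapses onto the chordal one.
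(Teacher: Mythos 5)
Your proposal is correct, and it takes a genuinely different route from the paper's proof. The paper argues region by region along the levels $\mathcal{R}_1,\mathcal{R}_2,\ldots$ of the Kikuchi construction: by induction on the level it shows that the counting number of the Kikuchi region with vertex set $S$ equals $-|\{(K,K')\in\mathcal{E}\,:\,K\cap K'=S\}|$ (with an extra $-1$ for singletons, coming from $R_{f_i}$), i.e.\ exactly the aggregated chordal counting number; the delicate step there is invoking the removal rule of the Kikuchi construction to argue that every edge of the subtree $T_S$ whose separator does not already correspond to an earlier region must have separator exactly $S$. You never track the construction levels at all: you first reduce ``coincides'' to equality of aggregate counting numbers $C(S)$ (energy matching via validity, entropy depending only on aggregates --- a reduction the paper uses implicitly but does not spell out), then show both sides satisfy the identical triangular system $\sum_{T\supseteq S}C(T)=1$ on the intersection poset --- for Kikuchi this is the defining M\"obius relation, for the chordal approximation it is the count of $t(S)$ maximal cliques versus $t(S)-1$ subtree edges supplied by the running-intersection property --- and a downward induction plus validity for singletons forces equality. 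Your route is cleaner and makes transparent why intersections that are neither maximal cliques nor separators receive counting number zero; its only real debt is the bookkeeping fact (which you correctly flag) that the Kikuchi construction yields exactly one region per clique intersection, with factor sets determined by vertex sets --- a fact the paper's level-by-level argument also relies on implicitly. Conversely, the paper's proof buys a region-by-region identification of each Kikuchi counting number with a clique-tree quantity (in particular exhibiting its independence of the chosen clique tree), whereas your argument identifies the counting numbers only in aggregate. Both proofs ultimately rest on the same two pillars: the M\"obius definition of the Kikuchi counting numbers and the subtree property of clique trees.
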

\begin{proof}
See Appendix \ref{app:coin}.
\end{proof}

\begin{corollary}
When $G$ is chordal, the back-off rates given by \eqref{eq:backoff} for the
size $k_{max}=n$ clique approximation or the Kikuchi approximation 
with $\mathcal{R}_0 = \{R_{f_1},\ldots,R_{f_n}\} \cup \{R(K) |K \in \mathcal{K}_G\}$, 
where $\mathcal{K}_G$ is the set of the maximal cliques in $G$, are both equal to \eqref{eq:nui}
and are therefore exact.
\end{corollary}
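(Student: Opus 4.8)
The plan is to prove that, once regions sharing the same variable set and factor set are merged (summing their counting numbers) and regions of counting number zero are discarded, the Kikuchi approximation and the chordal approximation have exactly the same regions, counting numbers and beliefs, hence the same region-based free energy. By Theorem \ref{th:Kikuchi} (with $k_{max}=n$, for which $\mathcal K_G(n)=\mathcal K_G$) the Kikuchi approximation in the statement is the size $n$ clique approximation, so it satisfies Condition \ref{cond:region}; since the chordal approximation does too, both have energy $U_\mathcal R(\{b_R\})=-\sum_{i\in V}\phi_i\ln(\nu_i)$ under clique belief and all beliefs are the clique beliefs \eqref{eq:cbelief} determined by the common $(\phi_1,\dots,\phi_n)$. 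Thus the whole content lies in the region entropy, i.e.\ in matching the regions and their counting numbers.

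First I would read off the Kikuchi region graph generated by $\mathcal R_0=\{R_{f_1},\dots,R_{f_n}\}\cup\{R(K)\mid K\in\mathcal K_G\}$. Intersecting $R_{f_i}$ with a clique region $R(K)$, $i\in K$, yields a region with $\mathcal V=\{x_i\}$ and $\mathcal F=\emptyset$, while intersecting $R(K)$ with $R(K')$ yields the region carrying vertex set $K\cap K'$ with all its induced edge factors. Iterating and discarding subsumed regions, the regions are exactly the $R_{f_i}$ together with one region $R_W$ (variable set $\{x_i\mid i\in W\}$, factor set $\{f_{(i,j)}\mid i,j\in W\}$) for each nonempty $W$ in the meet-semilattice $\mathcal L$ generated by $\mathcal K_G$ under intersection; region inclusion matches inclusion of the sets $W$, with each $R_{f_i}$ an extra cover of $R_{\{i\}}$ that is itself maximal.

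The heart of the argument is the counting numbers. As Kikuchi counting numbers satisfy $\tilde c_R=1-\sum_{R'\supsetneq R}\tilde c_{R'}$, they are the unique solution of the triangular system $\sum_{R'\supseteq R}\tilde c_{R'}=1$, $R\in\tilde{\mathcal R}$, solved from the maximal regions downward, so it suffices to verify that a candidate assignment solves it. I would take $\tilde c_{R_{f_i}}=1$, $\tilde c_{R(K)}=1$ for maximal cliques $K$, $\tilde c_{R_W}=-m(W)$ for a separator set $W$ with $|W|\ge 2$ (where $m(W)$ is the number of clique-tree edges $(K,K')$ with $K\cap K'=W$), $\tilde c_{R_W}=0$ for any other $W\in\mathcal L$ with $|W|\ge 2$, and leave the singleton values forced by the system. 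The one combinatorial fact needed is the subtree property of clique trees: for any clique $W$ the maximal cliques containing $W$ induce a subtree of $T$, so if there are $N(W)$ of them there are exactly $N(W)-1$ tree edges whose separator contains $W$. Hence for every $W\in\mathcal L$ with $|W|\ge 2$ the candidate gives $\sum_{W'\supseteq W}\tilde c_{R_{W'}}=N(W)-(N(W)-1)=1$ as required, and the equation for $R_{\{i\}}$, which additionally sees $R_{f_i}$ among its covers, forces $\tilde c_{R_{\{i\}}}=-1-m(\{i\})$.

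Matching is then immediate. Every $R_W$ that is neither a maximal clique nor a separator has counting number $0$ and is dropped, so the surviving Kikuchi regions are the $R_{f_i}$, the maximal clique regions (number $1$), and one separator region $R_W$ per separator set with number $-m(W)$ (for $|W|\ge 2$) or $-1-m(\{i\})$ (for $W=\{i\}$). On the chordal side, merging the $m(W)$ edge regions $R_{(K,K')}$ that share a separator set $W$ of size $\ge 2$ gives counting number $-m(W)$ and the same factor set, while for $W=\{i\}$ the $m(\{i\})$ corresponding edge regions (which have empty factor set) merge with $R_{x_i}$ to total $-1-m(\{i\})$; the maximal clique regions and the $R_{f_i}$ carry $1$ on both sides. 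The two merged approximations therefore coincide. I expect the main obstacle to be the singleton bookkeeping: because the factor regions $R_{f_i}$ change the normalization equation of $R_{\{i\}}$, one must check that the Kikuchi value $-1-m(\{i\})$ matches the combined chordal contribution of $R_{x_i}$ together with the singleton separators rather than $R_{x_i}$ alone; a secondary point is to argue carefully that the level-by-level subset removal in the Kikuchi construction produces each element of $\mathcal L$ exactly once.
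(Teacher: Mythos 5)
Your proposal is correct in substance and follows the same skeleton as the paper: invoke Theorem \ref{th:Kikuchi} with $k_{max}=n$ to identify the Kikuchi approximation with the size $n$ clique approximation, show that this Kikuchi approximation coincides with the chordal region-based approximation, and inherit \eqref{eq:nui} and exactness from Theorem \ref{th:chordal} and its corollary. Where you genuinely diverge is in the coincidence step: instead of citing the paper's Theorem \ref{th:coin}, you re-prove it, and by a different technique. The paper inducts on the levels $\mathcal{R}_1,\mathcal{R}_2,\ldots$ of the Kikuchi construction to show $\tilde c_R = -|\{(K,K')\in\mathcal{E} \mid \mathcal{V}_R = V_{R_{(K,K')}}\}| - 1_{\{|\mathcal{V}_R|=1\}}$ for every generated region; you instead note that the Kikuchi counting numbers are the unique solution of the triangular system $\sum_{R'\supseteq R}\tilde c_{R'}=1$ and verify a candidate assignment (your values agree with the paper's formula), using the same clique-tree subtree property. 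Your route is cleaner on the counting numbers, and your singleton bookkeeping (forcing $-1-m(\{i\})$, matched against $R_{x_i}$ merged with the singleton-separator edge regions) is exactly right and handled more explicitly than in the paper. What it buys you comes at a price you correctly flag but do not discharge: the verification only yields $N(W)-(N(W)-1)=1$ if every tree-edge separator and every singleton actually survives as a region of $\tilde{\mathcal{R}}$, and a separator $K\cap K'$ can be discarded at level $1$ (when it lies strictly inside another pairwise intersection) and only reappear later, so the semilattice characterization of $\tilde{\mathcal{R}}$ needs a real argument; the paper's level-by-level induction absorbs this issue as it goes rather than requiring that characterization up front. With that lemma supplied, your proof is complete.
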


The exactness of the above Kikuchi approximation for chordal conflict graphs was established 
independently in \cite{swamy_SPCOM, swamy_arXiv2017}.

\section{Experimental evaluation}\label{sec:eval}
To study the accuracy of the size $k_{max}$ clique approximation we perform simulation
experiments similar to one ones presented in \cite[Section IV.E]{kai1} for IGBP, except that we
consider a different set of conflict graphs and also compare with the local
chordal subgraph (LCS) approximation presented in \cite{vanhoudt_ton17}. More
specifically, we simulate the ideal CSMA model with the back-off rates set as estimated by
each approximation method and compute the mean relative error between the
given target throughputs (used as input by the approximation method) and 
the throughputs observed during simulation. 

The set of conflict graphs considered is similar to the ones used in \cite{vanhoudt_ton17}:
the $n$ nodes of the conflict graph are placed randomly in a square of size $1$ and there
exists an edge between two nodes if and only if the Euclidean distance between them is less 
than some threshold $R$. In the experiments we set $n=100$ nodes with $R$ values equal to
$0.15, 0.2$ and $0.25$. At this point we also note that node $i \in V$ requires the same input to compute its back-off rate
irrespective of whether it uses the size $k_{max}$ clique approximation or the LCS approximation:
it needs to construct the subgraph $G_i=(V_i,E_i)$ induced by node $i$ and its $d_i$ neighbors.
Hence both approximations can be implemented in a fully distributed manner. 

\begin{figure}[t]
\center
\includegraphics[width=0.4\textwidth]{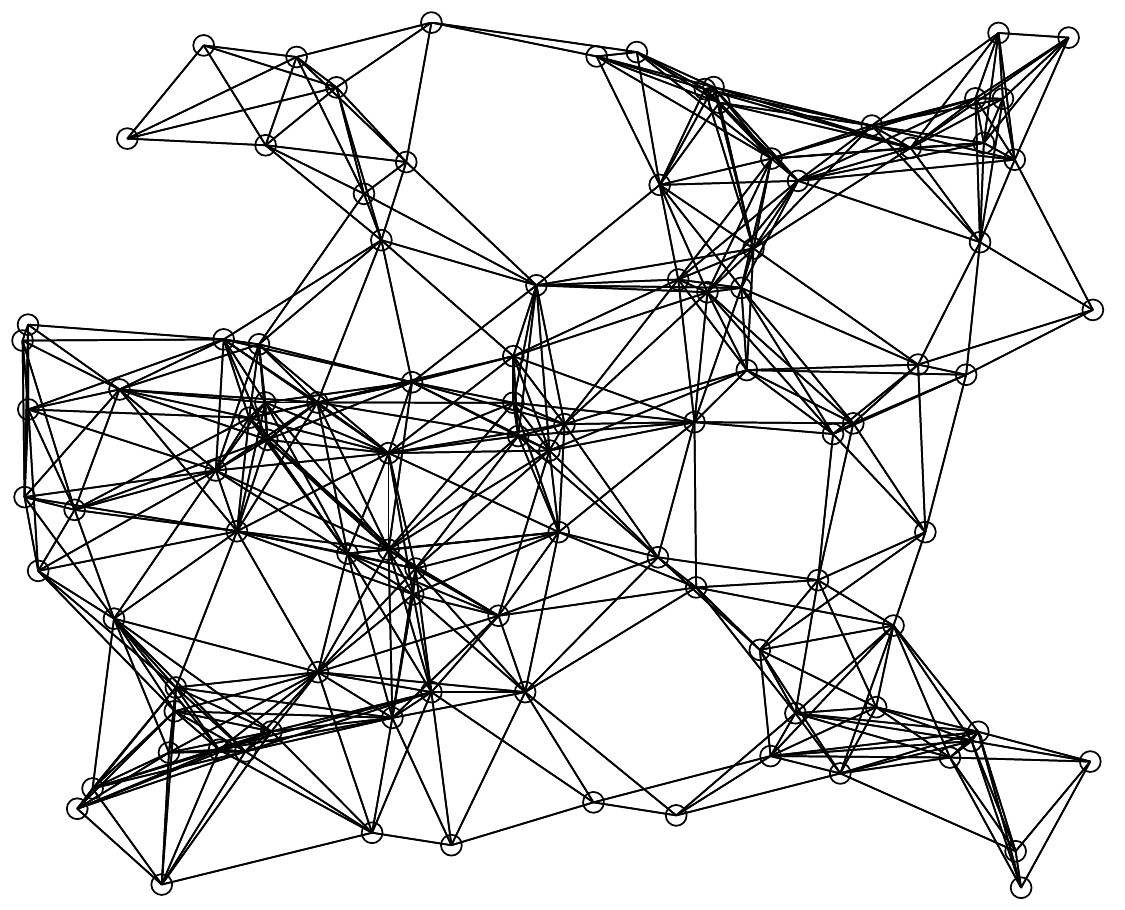}
\caption{Conflict graph with $n=100$ used in Figure \ref{fig:relerror} for $R = 0.20$.}
\label{fig:R020}
\end{figure}

In a first set of experiments the target throughput of each node was set equal to $\phi$ divided
by the size of the largest clique in the graph, where $\phi$ equals $0.55, 0.7$ and $0.85$.
Note that for $\phi > 1$ this vector does not belong to $\Gamma$, the set of achievable throughput vectors. 
Figure \ref{fig:relerror} presents the mean relative error of the LCS and size $k_{max}$ clique approximation
for various combinations of $R$ and $\phi$ (the corresponding conflict graph for $R=0.20$ is shown in Figure \ref{fig:R020}). 
For each value of $R$, the largest value for $k_{max}$ presented in this
figure corresponds to the size of the largest clique in the graph, that is, the same result is obtained when
setting $k_{max}=n$. Recall that we showed earlier that 
the size $n$ clique approximation coincides with a fixed point of the IGBP algorithm of \cite{kai1}.
Thus, the rightmost bar in Figure \ref{fig:relerror} corresponds to a fixed point of the IGBP algorithm.
Figure \ref{fig:relerror}  shows that the size $k_{max}$ clique approximation is more accurate that the LCS approximation in this setting, 
except for small $k_{max}$, at the expense of being more complex.

\begin{figure}[t]
\center
\includegraphics[width=0.45\textwidth]{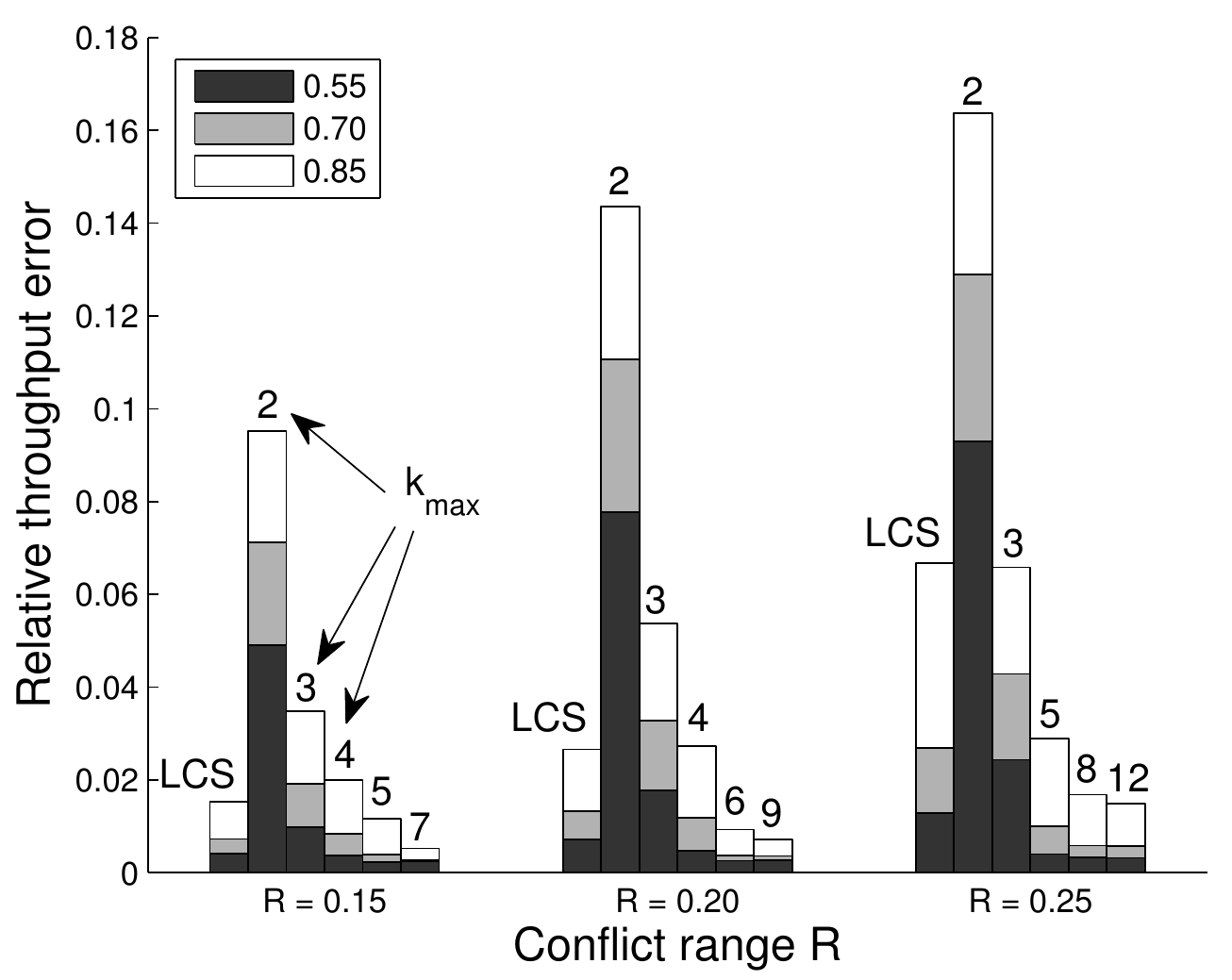}
\caption{Mean relative error in the achieved throughput of the LCS and size $k_{max}$ clique approximation
in a conflict graph with $n=100$ nodes for $\phi = 0.55, 0.7$ and $0.85$ and $R = 0.15, 0.2$ and $0.25$.}
\label{fig:relerror}
\end{figure}

We further note that the relative errors grow as the graph becomes more dense (increasing $R$) and this growth seems
more pronounced for the LCS approximation. We also note that the approximation becomes
worse as the target throughput of the links increases (increasing $\phi$). Nevertheless the mean relative
error of the size $n$ clique approximation remains below $2\%$ in all cases. This is somewhat higher
than the values reported in \cite{kai1} for IGBP, but this is mostly due to the fact that more
dense conflict graphs are considered here (for $R=0.15$ we have $301$ edges, while for $R=0.25$ 
we have as many as $788$ edges). 

While the results in Figure \ref{fig:relerror} are based on three conflicts graphs only, 
Figure \ref{fig:relerror2} compares the accuracy of the LCS and size $k_{max}$ approximations 
with $k_{max} = 2, 5, n$ on a set of $75$ conflict graphs: $25$ for each of the three $R$ values.
The target throughput of node $i$ is set equal to $0.85/(1+d_i)$, meaning not all nodes
have the same target throughput (as opposed to Figure \ref{fig:relerror}).
For each $R$ value and approximation method considered, Figure \ref{fig:relerror2} depicts the
mean relative throughput error (obtained by simulation) for the $2$ conflict graphs 
that  resulted in the smallest and largest mean relative throughput error, as well as the average
taken over the $25$ conflict graphs.

\begin{figure}[t]
\center
\includegraphics[width=0.45\textwidth]{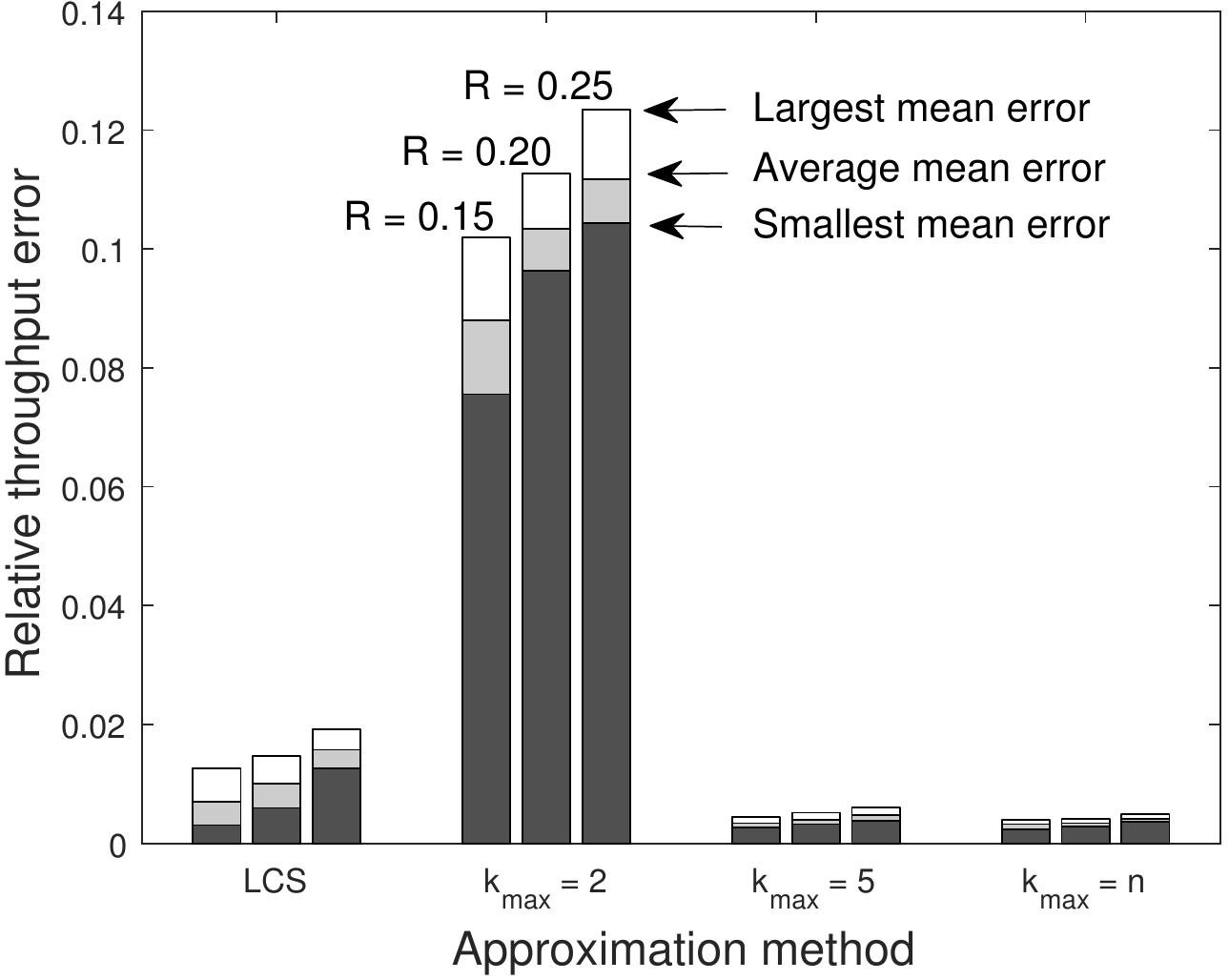}
\caption{Smallest, average and largest mean relative error in the achieved throughput 
of the LCS and size $k_{max}$ clique approximation
for $25$ conflict graphs with $n=100$ nodes when $\nu_i = 0.85/(1+d_i)$ for $R = 0.15, 0.2$ and $0.25$.}
\label{fig:relerror2}
\end{figure}
 
The results in Figure \ref{fig:relerror2} are in agreement with Figure \ref{fig:relerror}:
the LCS approximation outperforms the Bethe approximation, increasing $k_{max}$ reduces the relative throughput errors and the
LCS error increases more significantly when the graph becomes denser compared to the
size $n$ approximation. We further note that the size $k_{max} = 5$ approximation
produces errors close to the size $n$ approximation, which is a useful observation in case we
wish to limit the time needed to compute the required back-off rates. 

To get an idea on the computation times of the back-off vector for the different approximations,
we generated 1000 conflict graphs with $n=100$ and $R \in (0,0.25]$. Figure \ref{fig:times} depicts 
the average time needed to compute the vector of back-off rates for all the conflict graphs 
with a maximum clique size between $4$ and $15$ (only $9$ of the $1000$ conflict graphs contained a $15+$ clique).
The results show that the computation times of the LCS and Bethe approximation have a similar shape. They
also highlight that for denser graphs limiting $k_{max}$ may offer an attractive trade-off between the
computation times and accuracy of the approximation. 

We end by noting that the time complexity to compute the back-off rate for node $i$ only depends on
the structure of the subgraph $G_i$ induced by node $i$ and its neighbors. Thus the overall network size $n$
has no impact on the computation times and as such the approximation method is also suitable for very large
graphs as long as the size of the one hop neighborhood does not scale with the overall network size.  
Further the complexity to compute $\nu_i$ for the size $n$ approximation
is similar to performing a single iteration of the IGBP algorithm, for
which convergence to a (unique) fixed point is not guaranteed and the number of iterations grows 
with the density of the conflict graph (see \cite[Tables IV and V]{kai1}).

\begin{figure}[t]
\center
\includegraphics[width=0.45\textwidth]{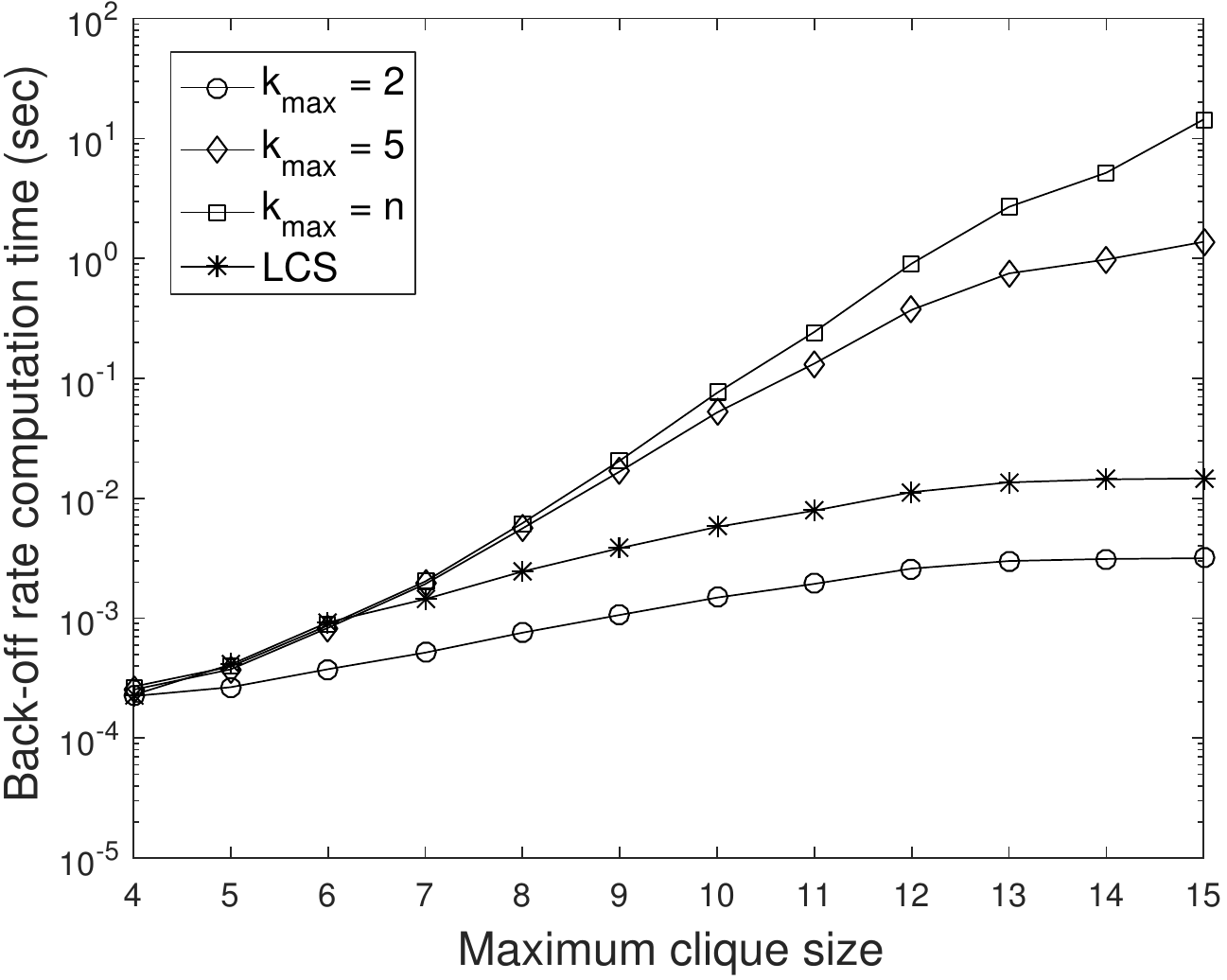}
\caption{Computation time of the vector of back-off rates as a function of the maximum clique size of $G$.}
\label{fig:times}
\end{figure}

\section{Conclusions}\label{sec:conc}
In this paper we presented the class of region-based free energy approximations for the ideal CSMA model, 
which contains the Bethe approximation of \cite{yun1} as a special case.
We obtained a closed form expression for the vector of back-off rates that corresponds to a zero gradient point
of the free energy within the set of clique beliefs (in terms of its counting numbers). 

We subsequently focused on the size $k_{max}$ clique
approximation (which can be implemented in a fully distributed manner)
and derived explicit expressions for its counting numbers as well as a recursive method
to compute the back-off rates more efficiently. We further showed that this approximation is exact on
chordal conflict graphs and coincides with a Kikuchi approximation. The latter result implies that
the size $k_{max}$ clique approximation with $k_{max}=n$ gives an explicit expression for a fixed point
of the IGBP algorithm of \cite{kai1}. 
The paper also contains an alternate proof for the back-off rates needed to achieve any achievable
throughout vector in a chordal graph $G$ presented in \cite{vanhoudt_ton17}.
 
There are a number of possible extensions to the work presented in this paper. First, while this paper has
focused on achieving a given target throughput vector, it should be possible to consider utility maximization
problems as in \cite{yun1}. Second, one could try to relax the conflict graph based interference model considered in this 
paper to a more realistic SINR (signal-to-interference-plus-noise ratio) model. In fact, such a relaxation 
of the Bethe approximation presented in \cite{yun1} was recently developed in \cite{swamy1}. 
Finally, other free energy approximation techniques such as the tree-based reparameterization framework of \cite{wainwright1}
could be considered as well.

\bibliographystyle{plain}
\bibliography{../../PhD/thesis}

\appendices

\section{Proof of Theorem \ref{th:Kikuchi}}\label{app:Kikuchi_proof}
\begin{proof}
First, as any intersection of two cliques is a clique, all the regions
of the Kikuchi approximations correspond to a region in the size $k_{max}$ clique approximation. On the other hand 
not every clique $K$ of size $k \in \{2,\ldots,k_{max}-1\}$ is necessarily a region in the Kikuchi approximation.
The proof exists in showing that the counting number $c_{R(K)} = 0$ for any region $R(K) \not\in \tilde{\mathcal{R}}$,
while $c_{R(K)} = \tilde c_{R(K)} $ otherwise. As removing regions $R$ with $c_R = 0$ does not alter the approximation, this 
suffices to prove the theorem. 

We start by noting that if the region of a clique $R(K) \not\in \tilde{\mathcal{R}}$, there exists a unique 
$i \in \{0,\ldots,s\}$ and region $S(K)\in \mathcal{R}_i$ such that
$R(K) \subset S(K)$, while $R(K) \not\subset R'$ for $R' \in \cup_{k=i}^s \mathcal{R}_k$ 
with $R' \not= S(K)$. Further, for any $R' \in \cup_{k=0}^{i-1} \mathcal{R}_k$ with $R(K) \subset R'$, we have
$S(K) \subset R'$ as $S(K)\cap R'$ (or a superset thereof) would otherwise be a region in $\mathcal{R}_{i+1}$
that contains $R(K)$. For $R(K) \not\in \tilde{\mathcal{R}}$ we define $f(K)=i$ such that the above holds.

We now prove by induction that $c_{R(K)}=0$ if $R(K) \not\in \tilde{\mathcal{R}}$ and
$c_{R(K)} = \tilde c_{R(K)}$ otherwise.
If $f(K)=0$, $S(K)$ is a maximal clique and  we have
\[c_{R(K)} = 1- c_{S(K)} - \sum_{\substack{R(K') \in \mathcal{R}':\\ K \subset K', R(K')\not\in \tilde{\mathcal{R}}}}  c_{R(K')}. \]
The latter sum can be shown to be equal to zero by noting that no such $K'$ exists when $|S(K)|=|R(K)|+1$ and 
that it is a sum of zeros when $|S(K)|>|R(K)|+1$ by applying induction on $|S(K)|-|R(K)|$. Moreover $c_{S(K)}=1$
as $S(K) \in \mathcal{R}_0$, yielding  $c_{R(K)}=0$.
Further, for any $R(K) \in \mathcal{R}_1$ 
\begin{align*}c_{R(K)} &= 1 - \hspace*{-0.4cm}\sum_{R' \in \mathcal{R}_0: R(K) \subset R'} c_{R'} -  \sum_{\substack{R(K') \in \mathcal{R}':\\ K \subset K', R(K') \not\in \tilde{\mathcal{R}}}} c_{R(K')} \\
&= \tilde c_{R(K)},\end{align*}
as for any $K'$ with $R(K')\not\in \tilde{\mathcal{R}}$ with $K \subset K'$ we have $f(K')=0$ and thus $c_{R(K')}=0$.
By induction we therefore have $c_{R(K)}=0$ for $R(K) \not\in \tilde{\mathcal{R}}$ with $f(K) < i$ and
$c_{R'} = \tilde c_{R'}$ for $R' \in \cup_{k=1}^{i} \mathcal{R}_k$. 

When $R(K) \not\in \tilde{\mathcal{R}}$ and $f(K) = i$, we can write $c_{R(K)}$ as
\begin{align}\label{eq:cRK}
c_{R(K)} &= 1 - \sum_{R' \in \tilde{\mathcal{R}}: R(K) \subset R'} c_{R'} 
- \sum_{\substack{R(K') \in \mathcal{R}':\\ K \subset K', R(K')\not\in \tilde{\mathcal{R}}}} c_{R(K')} 
\end{align} 
The former sum equals
\[
\sum_{R' \in \cup_{k=0}^{i} \mathcal{R}_k: R(K) \subset R'} c_{R'} = \tilde c_{S(K)}+\hspace*{-0.4cm}\sum_{R' \in \cup_{k=0}^{i-1} \mathcal{R}_k: R(K) \subset R'} \tilde c_{R'},
\]
where the equality $c_{R'} = \tilde c_{R'}$ and $c_{S(K)}=\tilde c_{S(K)}$ follows by induction.
Using the definition of $\tilde c_{S(K)}$ implies
\begin{align}\label{eq:firstsum}
\sum_{R' \in \cup_{k=0}^{i} \mathcal{R}_k: R(K) \subset R'} c_{R'} &=
1-\hspace*{-0.4cm}\sum_{R' \in \cup_{k=0}^{i-1}\mathcal{R}_k: S(K) \subset R'} \tilde c_{R'} \nonumber \\
&\hspace*{-2cm} + 
\sum_{R' \in \cup_{k=0}^{i-1}\mathcal{R}_k: R(K) \subset R'} \tilde c_{R'} = 1,
\end{align}
as for any $R' \in \cup_{k=0}^{i-1} \mathcal{R}_k$ with $R(K) \subset R'$ we have $S(K) \subset R'$ as noted before.

The latter sum in \eqref{eq:cRK} equals 
\[\sum_{R(K') \in \mathcal{R}': K \subset K', R(K')\not\in \tilde{\mathcal{R}}}  c_{R(K')} 1_{\{f(K')=i\}}, \]
as $f(K') \leq i$ for any $K'$ with $K \subset K'$ and by induction $c_{R(K')}=0$ if $f(K')<i$.
Hence, using \eqref{eq:cRK} and \eqref{eq:firstsum} we obtain 
\begin{align*}
c_{R(K)}= \sum_{R(K') \in \mathcal{R}': K \subset K', R(K')\not\in \tilde{\mathcal{R}}}  c_{R(K')} 1_{\{f(K')=i\}},
\end{align*} 
which is equal to zero by induction on $|S(K)|-|R(K)|$.
This also implies that $c_{R'} = \tilde c_{R'}$ for $R' \in \cup_{k=1}^{i+1} \mathcal{R}_k$.

The regions $R_{f_i}$ are clearly part of both approximations and their counting numbers
are equal to one in both cases, while the regions $R_{x_i}$ are also part of both approximations
(due to the presence of $R_{f_i}$) and their counting numbers are identical as for any clique $K$
with $i \in K$ we have $c_{R(K)} = 0$ if $R(K) \not\in \tilde{\mathcal{R}}$ and  $c_{R(K)} = \tilde c_{R(K)}$
otherwise. 
\end{proof}

\section{Proof of Theorem \ref{th:chordal}}\label{app:chordal_proof}

\begin{proof}
To establish this result we first show that the region-based free energy approximation introduced in this section
can also be obtained using the junction graph method (see \cite[Appendix A]{yedidia1}). 

A junction graph $\mathcal{G}=(V_L,V_S,E_\mathcal{G},L)$ is a directed bipartite graph
consisting of a set $V_L$ of large vertices, a set $V_S$ of small vertices, a set of directed edges from $V_L$ to $V_S$ and
a set of labels $L(v)$ for each $v \in V_L \cup  V_S$. The labels $L(v)$ are a subset of the set $I=\{x_1,\ldots,x_n\}\cup
\{f_1,\ldots,f_M\}$ of variable and factor nodes of a given factor graph. Further, for $\mathcal{G}$ to be a junction graph there are two additional 
conditions: (i) if $(v,w) \in E_\mathcal{G}$ then $L(w) \subseteq L(v)$ and (ii) for any $i \in I$
the subgraph of $\mathcal{G}$ induced by the vertices in $V_L \cup V_S$ for which $i \in L(v)$ must be a connected tree. 

Consider a region-based free energy approximation based on the regions $\mathcal{R}$ and counting numbers $c_R$.
Assume $\mathcal{R} = \mathcal{R}_L \cup \mathcal{R}_S$ such that the regions can be organized into a junction graph
and for each $R \in \mathcal{R}_L$ there is a vertex $\mathcal{V}_R \in V_L$ and
for each $R \in \mathcal{R}_S$ there is a vertex $\mathcal{V}_R \in V_S$, while $L(\mathcal{V}_R)=\mathcal{V}_R\cup \mathcal{F}_R$ for any $R \in \mathcal{R}$.
Further assume that $c_R = 1$ for $R \in \mathcal{R}_L$ and $c_R = 1 - d_R$ for $R \in \mathcal{R}_S$, where
$d_R$ is the number of neighbors of $R$ in the junction graph. If the believes match the exact marginal probabilities
and the junction graph is a tree, the region-based entropy is exact, i.e., matches the Gibbs entropy, and
\begin{align}\label{eq:pxgen}
p(x) = \frac{\prod_{R \in \mathcal{R}_L} p_{R}(x_{R})}{\prod_{R \in \mathcal{R}_S} p_{R}(x_{R})^{d_R-1}}, 
\end{align}
with $p_R(x_R) = \sum_{x_j\not\in \mathcal{V}_R} p(x_1,\ldots,x_n)$, as argued in \cite[Appendix A]{yedidia1}. 

We now organize the $2n+2|\mathcal{K}_G|-1$ regions of our chordal region-based approximation into a junction graph $\mathcal{G}$ 
by letting $V_L = \{v_K | K \in \mathcal{K}_G\} \cup \{v_{f_i} | i \in V\}$,
$V_S = \{v_{(K,K')} | (K,K') \in \mathcal{E} \}\cup \{v_{x_i} | i \in V\}$. For each edge $(K,K') \in \mathcal{E}$
we add an edge from $v_K$ to $v_{(K,K')}$ and one from $v_{K'}$ to $v_{(K,K')}$, while for each $v_{x_i}$ we
add an edge from $v_{f_i}$ to $v_{x_i}$ and one from $v_{K_i}$ to $v_{x_i}$, where $K_i$ is a randomly selected
$K \in \mathcal{K}_G$ such that $i \in K$. Note that all the nodes $v \in V_S$ have exactly two neighbors, that is, $d_R = 2$
for $R \in \mathcal{R}_S$. The labels are defined by $L(\mathcal{V}_R)=\mathcal{V}_R\cup \mathcal{F}_R$ for all the regions $R \in \mathcal{R}$. 

It is trivial to check that condition (i) for $\mathcal{G}$ to be a junction graph holds. Condition (ii) follows by
noting that for any $i \in V$ the subgraph of the clique tree $T$ induced by the cliques $K \in \mathcal{K}_G$ with 
$i\in K$ is a subtree of $T$ and the same holds for any $i,j \in V$ if we look at the subgraph induced by the maximal cliques containing
both $i$ and $j$. As $T$ is a tree, so is the junction graph $\mathcal{G}$, which implies that the entropy is exact and 
\eqref{eq:px} follows from \eqref{eq:pxgen} by noting that $d_R = 2$ and $p_{R_{x_i}}(x_i) = p_{R_{f_i}}(x_i)$ for $i \in V$.

The expression for $H(p)$ follows from the well-known fact that the Gibbs free energy $F(b) = U(b) - H(b)$
is minimized when $b(x) = p(x)$ for all $x$, where it attains the value $-\ln(Z)$, thus due to \eqref{eq:URbR2} 
we have $-\ln(Z) = - \sum_{i \in V} p_i(1) \ln(\nu_i) - H(p)$.    
\end{proof}

\section{Proof of Theorem \ref{th:coin}}\label{app:coin}

\begin{proof}
The proof exists in showing that for any region $R \in \cup_{k=1}^s \mathcal{R}_k$ the counting number $\tilde c_R$ is given by
\[ \tilde c_R =  -|\{(K,K') \in \mathcal{E} | \mathcal{V}_R = V_{R_{(K,K')}}\}|-1_{\{|\mathcal{V}_R|=1\}}.\] 
In other words, if the region $R$ corresponds to a clique $S$ of size $2$ or more, $-\tilde c_R$ is equal to
the number of edges $(K,K')$ in the clique tree $T$ such that $S = K \cap K'$, while for
the regions with $\mathcal{V}_R = \{x_i\}$ it is the same number plus one (due to the region $R_{f_i}$).   

We now proceed by induction on $i$. Assume $R \in \mathcal{R}_1$ and its corresponding clique $S$ has size $2$ or more and
is a subset of $b$ maximal cliques $\{K_1,\ldots,K_b\}$, then $\tilde c_R = 1 - b$. Further, the subtree of $\mathcal{T}$ induced by 
$\{K_1,\ldots,K_b\}$ contains exactly $b-1$ edges $(K,K')$ and $S \subseteq K \cap K'$. We now argue that 
$S = K \cap K'$ for each of these $b-1$ edges $(K,K')$. Assume one of these edges $(K,K')$ is such that $S \subset K \cap K'$,
meaning the intersection of $K$ and $K'$, two maximal cliques, contains $S$ as a strict subset. Then, $R$ cannot be
part of $\mathcal{R}_1$ as any region that was obtained as an intersection between two regions of $\mathcal{R}_0$,
but that is part of a larger intersection of two such regions was removed from $\mathcal{R}_1$ during its construction.

Next assume $R \in \mathcal{R}_{i+1}$ and its corresponding clique $S$ is of size $2$ or more, then
\begin{align*}
\tilde c_R &= 1 - \hspace*{-0.4cm}\sum_{R' \in \cup_{k=0}^i \mathcal{R}_k: R \subset R'} \tilde c_R 
= 1 - |\{K \in \mathcal{K}_G | S \subset K\}|\\
& + \sum_{k=1}^i \sum_{R' \in \mathcal{R}_k: R \subset R'}
|\{(K,K') \in \mathcal{E} | \mathcal{V}_{R'} = V_{R_{(K,K')}}\}|, 
\end{align*}
by induction. Hence $-\tilde c_R$ thus matches the number of edges in the subtree $T_S$ of $T$
induced by the maximal cliques that contain $S$ minus the number of edges $(K,K')$ for which $K \cap K'$
corresponds to a region in $\cup_{k=0}^i \mathcal{R}_k$. Note if $(K,K') \in T$ and $K \cap K'$ corresponds to 
a region in  $\cup_{k=0}^i \mathcal{R}_k$ with $S \subseteq K \cap K'$ then  $S \not= K \cap K'$ as $R \in \mathcal{R}_{i+1}$.
The claim  for $\tilde c_R$ therefore follows
if the edges of this subtree for which $K\cap K'$ does not correspond to a region in $\cup_{k=0}^i \mathcal{R}_k$ are
such that $K \cap K' = S$.  This is the case as any edge in the subtree $T_S$ with an intersection that is a superset of $S$
and that is not part of $\cup_{k=0}^i \mathcal{R}_k$ would cause the removal of $R$ from $\mathcal{R}_{i+1}$ during
its construction.

For a region $R$ with $\mathcal{V}_R = \{x_i\}$ and $\mathcal{F}_R = \emptyset$ the same reasoning applies except that $R$
is also a subset of $R_{f_i}$ and $\tilde c_{R_{f_i}}=1$ causing the minus one in $\tilde c_R$. 
\end{proof}

\end{document}